\newtheorem{theorem}{Theorem}
\newtheorem*{corollary*}{Corollary}
\newtheorem*{remark}{Remark}
\DeclareMathAlphabet{\mathbit}{OML}{cmr}{bx}{it}
\DeclareMathAlphabet{\mathsf}{OT1}{cmss}{m}{n}
\DeclareMathAlphabet{\mathbsf}{OT1}{cmss}{bx}{it}
\newcommand{\inC}[1]{\ensuremath{\in\mathbb{C}^{#1}}}
\newcommand{\inR}[1]{\ensuremath{\in\mathbb{R}^{#1}}}
\newcommand{\inset}[2]{\ensuremath{\in \left\{#1,\ldots,#2\right\}}}
\newcommand{\Real}[1]{\ensuremath{\Re\left\{#1\right\}}}
\newcommand\diff[1]{\ensuremath{\:\mathrm{d}#1}}
\newcommand\pderiv[2]{\ensuremath{\frac{\partial#1}{\partial#2}}}
\newcommand\pderivk[3]{\ensuremath{\frac{\partial^{#3}#1}{\partial{#2}^{#3}}}}
\newlength{\figurewidth}
\newlength{\figureheight}
\pgfplotsset{compat=newest}
\pgfplotsset{plot coordinates/math parser=false}
\pgfplotsset{every axis/.append style={font=\footnotesize}}
\pgfplotsset{
	ylabel right/.style={
		after end axis/.append code={
			\node [rotate=90, anchor=north] at (rel axis cs:1,0.5) {#1};
		}   
	}
}
\title{Statistics of the Nonlinear Discrete Spectrum of a Noisy Pulse}
\author{Francisco Javier Garc\'ia-G\'omez and Vahid Aref
\IEEEcompsocitemizethanks{
	\IEEEcompsocthanksitem Date of current version \today. J. Garc\'ia-G\'omez is with the Institute for Communications Engineering (LNT), Technical University of Munich, Germany (e-mail: javier.garcia@tum.de). His work was supported by the German Research Foundation under Grant KR 3517/8-1.
	\IEEEcompsocthanksitem V. Aref is with Nokia Bell Labs, Stuttgart 70435, Germany (e-mail: vahid.aref@nokia-bell-labs.com). 
	}
}
\begin{document}
\maketitle


\begin{abstract}
In the presence of additive Gaussian noise, the statistics of the nonlinear Fourier transform (NFT) of a pulse are not yet completely known in closed form. In this paper, we propose a novel approach to study this problem. Our contributions are twofold: first, we extend the existing Fourier Collocation (FC) method to compute the whole discrete spectrum (eigenvalues and spectral amplitudes). We show numerically that the accuracy of FC is comparable to the state-of-the-art NFT algorithms. 
Second, we apply perturbation theory of linear operators to derive analytic expressions for the joint statistics of the eigenvalues and the spectral amplitudes when a pulse is contaminated by additive Gaussian noise. Our analytic expressions closely match the empirical statistics obtained through simulations. 
\end{abstract}

\begin{IEEEkeywords}
Nonlinear Fourier Transform, Nonlinear Frequency Division Multiplexing, Multi-soliton, 
Fourier collocation
\end{IEEEkeywords}

 \thispagestyle{first}
 \pagestyle{others}

\section{Introduction}
The Nonlinear Fourier Transform (NFT), or Inverse Scattering Transform~\cite{ablowitz1981ist, mansoor_parts} has been proposed as an alternative for system design in an attempt to overcome the capacity peak reported in~\cite{essiambre_limits} of linear transmission systems over the nonlinear optical channel. An overview on the NFT and its application for optical communications is given in~\cite{turitsyn2017overview}.

Communication using the NFT (continuous or discrete spectrum) has been demonstrated numerically and experimentally, e.g. \cite{turitsyn2017overview, aref2018contdisc, hari_multieigenvalue, civelli2018dualpol,gui2017alternative, le2018gbps}. Despite some promising results, the effect of channel noise on the NFT is not yet well understood.
Recently, a general method has been developed
in \cite{sander_statistics} to numerically
compute the statistics of the spectral coefficients (not the eigenvalues) of a signal with additive white Gaussian noise (AWGN). This method, however, requires knowledge of the time-domain signal.
{In the case of propagation along an optical fiber with \textit{distributed} AWGN along the fiber, the statistics of the NFT of a first-order soliton are well known~\cite{gordon86soliton, zhang_perturbation, derevyanko_pdf}.
Recently, the statistics of the eigenvalues of an arbitrary pulse with the same propagation model were derived~\cite{prati2018some}.}
{All these results were obtained based on some perturbative methods.}

This work builds up on our previous paper~\cite{garcia2018statistics} and makes a twofold contribution. First, we extend the existing Fourier Collocation (FC) method~\cite[Sec. 2.4.3]{yang_nlse} to compute the complete discrete spectrum (eigenvalues and spectral amplitudes) of arbitrary pulses. We apply a proper windowing and truncation to overcome the ringing problem caused by non-periodic boundary conditions of NFT. Our simulations show that our method achieves a comparable accuracy to the best existing methods when the number of time samples is low.

Second, we apply the perturbation theory of matrix eigenvalues~\cite{kato_eigenvalues} to our extended method to derive the statistics of the discrete spectrum when a pulse is contaminated by Gaussian noise. Note that the eigenvalue perturbation theory is also used in~\cite[Part III, Sec. IV-A]{mansoor_parts} {and~\cite{prati2018some} to find the statistics} of the eigenvalues.
{Our method is novel
 in two aspects: we analyze the statistics in the frequency domain, and we provide a single method to compute both eigenvalues and spectral amplitudes, thus allowing computation of cross-correlations between the two.}
We show that
our analytic expressions for the statistics of the discrete spectrum closely match the statistics obtained through Monte-Carlo simulations.

The paper is organized as follows. In Sec.~\ref{sec:preliminaries}, we briefly overview the NFT and the multi-soliton pulses.
In Sec.~\ref{sec:fc}, we describe the FC method and extend it to compute also the discrete spectral amplitudes by proper windowing and truncation. 
Applying a first-order perturbation method,
we derive   
the statistics of the discrete spectrum for an arbitrary pulse in Sec.~\ref{sec:perturbation}.
We validate in Sec.~\ref{sec:numerical} both our extended FC method and our analytic expressions for the statistics of the discrete spectrum through simulations with different pulses. Sec.~\ref{sec:conclusion} concludes the paper.

\textit{Notation:} Bold lowercase letters $\mathbf{x}$ denote vectors, and bold uppercase letters $\mathbf{X}$ denote matrices. $\mathbf{X}^{\mathrm{T}}$ and $\mathbf{X}^{\mathrm{H}}$ are respectively the transpose and conjugate transpose of $\mathbf{X}$. $(\mathbf{x}, \mathbf{y})$ is the horizontal concatenation of vectors $\mathbf{x}$ and $\mathbf{y}$. $\Re x$ is the real part of $x$, and $\Im x$ is its imaginary part.

\section{System model}\label{sec:preliminaries}

Consider the complex envelope $A(Z, \tau)$ of an electrical field propagating along an optical fiber, where $Z$ is distance and $\tau$ is time. The propagation is modeled by
the Nonlinear Schr\"odinger Equation (NLSE)~\cite[Eq. (2.3.46)]{agrawal_nfo}:
\begin{equation}
\pderiv{A(Z, \tau)}{Z}= -j\frac{\beta_2}{2}\pderivk{A(Z, \tau)}{T}{2} +j\gamma\left|A(Z, \tau)\right|^2 A(Z, \tau) %
\label{eq:nlse_analog}
\end{equation}
where $\beta_2$ is the chromatic dispersion
parameter, and $\gamma$ is the nonlinear coefficient. We neglect attenuation in~\eqref{eq:nlse_analog} assuming that it is compensated by distributed amplification. 
With proper normalization~\cite{mansoor_parts}, the NLSE can be transformed to
\begin{equation}
\pderiv{}{z}q(z, t)=j\pderivk{}{t}{2}q(z, t) +j2\left|q(z, t)\right|^2 q(z, t) 
\label{eq:nlse}
\end{equation}
where $q$ is the normalized signal, $t$ is the normalized time, and $z$ is the normalized distance.

\subsection{Description in Nonlinear Spectrum} 
The NFT is calculated by solving the Zakharov-Shabat system (ZSS)~\cite{ablowitz1981ist, mansoor_parts}
\begin{equation}
\left(\begin{matrix}
-\frac{\partial}{\partial t} & q(t) \\ q^*(t) & \frac{\partial}{\partial t}
\end{matrix}\right)\left(\begin{matrix}
v_1(t,\lambda) \\ v_2(t,\lambda)
\end{matrix}\right)=j\lambda \left(\begin{matrix}
v_1(t,\lambda) \\ v_2(t,\lambda)
\end{matrix}\right)
\label{eq:zs}
\end{equation}
with the boundary condition
\begin{equation}
\mathbf{v}(t, \lambda)\to\left(\begin{matrix}
1\\0
\end{matrix}\right)e^{-j\lambda t}, \quad t\to-\infty
\label{eq:boundary}
\end{equation}
{where  $\mathbf{v}(t,\lambda)=(v_1(t, \lambda) v_2(t,\lambda))^{\mathrm{T}}$ is the \textit{Jost solution}.} 
The \textit{spectral coefficients} $a(\lambda)$ and $b(\lambda)$ are given by
\begin{subequations}
	\begin{align}
	a(\lambda)&=\lim\limits_{t\to\infty} v_1(t, \lambda)e^{j\lambda t} \\
	b(\lambda)&=\lim\limits_{t\to\infty} v_2(t, \lambda)e^{-j\lambda t}.
	\end{align}
	\label{eq:ab_lim}
\end{subequations}
The NFT of the signal $q(t)$ is made up of two spectra:
\begin{itemize}
	\item the \textit{continuous spectrum} $Q(\xi)=\frac{b(\xi)}{a(\xi)}$, for $\xi\in\mathbb{R}$;
	\item the \textit{discrete spectrum} $Q_k=\frac{b(\lambda_k)}{a_\lambda(\lambda_k)}$, for the $K$ eigenvalues $\left\{\lambda_k\in\mathbb{C}^+\colon a(\lambda_k)=0\right\}$
\end{itemize}
where $a_{\lambda}=\mathrm{d}a/\mathrm{d}\lambda$ and $\mathbb{C}^+=\{\lambda\in\mathbb{C}\colon\Im{\lambda}>0\}$.

In recent works~\cite{wahls2017generation, le2018gbps, gui2017alternative}, the use of {$b_k=b(\lambda_k)$ instead of $Q_k$ for data modulation in the discrete spectrum has been shown to achieve better results}. In terms of degrees of freedom, both approaches are equivalent, as given $\lambda_k$ {and $b(\xi)$ for $\xi\in\mathbb{R}$, one can compute $a(\lambda)$ as~\cite[Ch. I, Eq. (6.23)]{faddeev2007hamiltonian}
\begin{equation}
    a(\lambda)=\exp\left[\frac{1}{2\pi j}\int\limits_{-\infty}^{\infty}\frac{\log (1-|b(\xi)|^2)}{\xi-\lambda}\diff{\xi}\right]\prod_{k=1}^{K}\frac{\lambda-\lambda_k}{\lambda-\lambda_k^*}.
    \label{eq:afromcont}
\end{equation}
For this reason, in this paper we obtain the statistics of $\lambda_k$ and $b_k$, but not of $Q_k$.}
%
The usefulness of the NFT lies in the fact that, given a signal $q(z, t)$ propagating according to the NLSE~\eqref{eq:nlse}, the evolution of its NFT in $z$ is multiplicative:
	\begin{align}
	Q(z, \xi)&=Q(0, \xi)e^{4j\xi^2 z} &
	\lambda_k(z)&=\lambda_k(0) \nonumber \\
	b_k(z)&=b_k(0)e^{4j\lambda_k^2 z} &
	a(z, \lambda)&=a(0, \lambda). 
	\label{eq:nft_z}
	\end{align}
We skip index $z$ in the sequel for simplicity.
Many numerical algorithms have been developed to compute NFT (e.g.,~\cite{mansoor_parts,turitsyn2017overview}). 
As a benchmark, we use in this paper the forward-backward iterations~\cite{hari2016bidierctional, aref_control_detection} combined with a recently proposed algorithm in \cite{chimmalgi2018cfqm} with a sixth-order Commutator-Free Quasi-Magnus (CFQM) integrator $\mathrm{CF}_4^{[6]}$. {Its sixth-order accuracy is the best we have found in the literature}. We replace the trapezoidal integration of~\cite{aref_control_detection} with the CFQM to improve the performance. We call this algorithm FB-CFQM.

\subsection{Multi-soliton Pulses}
A \textit{multi-soliton pulse} has no continuous spectrum, i.e., $b(\xi)=0$. In Algorithm~\ref{algo:darboux}, we provide pseudo-code that uses the Darboux transform~\cite{matveev1991darboux} to construct a time-domain multi-soliton pulse and its Jost solutions from the discrete spectrum. Some other Inverse NFT algorithms are reviewed in \cite{turitsyn2017overview}.
\begin{algorithm}[t]
	\tcc{initialize the Jost solutions $\mathbf{v}_k^{(0)}(t)$. The superscript $^{(i)}$ indicates the algorithm iteration number}
	\For{$k\leftarrow 1$ \KwTo $K$}{
		$\mathbf{v}^{(0)}_k(t)=(e^{-j\lambda_k t},-b_k e^{j\lambda_k t})^{\mathrm{T}}$\;
	}
	$q^{(0)} (t)= 0$\;
	\tcc{iteratively add $(\lambda_i,b_i)$}
	\For{$i\leftarrow 1$ \KwTo $K$}{
		$(f_1,f_2)= \mathbf{v}_i^{(i-1)}(t)$\;
		\tcc{update signal}
		$q^{(i)}(t)= q^{(i-1)}(t)-2j(\lambda_i-\lambda_i^*)\frac{f_2^*(t)f_1(t)}{|f_1(t)|^2+|f_2(t)|^2}$\;
		
		\tcc{update $\mathbf{v}_i^{(i)}(t)$}
		$C=b_i \prod_{k=1}^{i-1}\left(\lambda_i-\lambda_k\right) \prod_{k=i+1}^{K}1/\left(\lambda_i-\lambda_k^*\right)$\;
		$\displaystyle \mathbf{v}_i^{(i)}(t)=\frac{C}{|f_1(t)|^2+|f_2(t)|^2}\left(\begin{matrix}-f_2^*(t)\\f_1^*(t)\end{matrix}\right)$\;
		
		\tcc{update $\mathbf{v}_k^{(i)}(t)\triangleq(v_{k,1}^{(i)}(t), v_{k,2}^{(i)}(t))^T$}
		\For{$k\leftarrow 1$ \KwTo $K$; $k\ne i$}{
			$v_{k,1}^{(i)}(t)= \left(\lambda_k -\lambda_i^*-\frac{(\lambda_i-\lambda_i^*)|f_1(t)|^2}{|f_1(t)|^2+|f_2(t)|^2}
			\right) v_{k,1}^{(i-1)}(t) -\frac{(\lambda_i-\lambda_i^*)f_2^*(t)f_1(t)}{|f_1(t)|^2+|f_2(t)|^2}v_{k,2}^{(i-1)}(t)$\;
			$v_{k,2}^{(i)}(t)= 
			-\frac{(\lambda_i-\lambda_i^*)f_2(t)f_1^*(t)}{|f_1(t)|^2+|f_2(t)|^2}v_{k,1}^{(i-1)}(t)
			+\left(\lambda_k -\lambda_i+\frac{(\lambda_i-\lambda_i^*)|f_1(t)|^2}{|f_1(t)|^2+|f_2(t)|^2}
			\right)v_{k,2}^{(i-1)}(t)$\;
		}
		%
	}
	{\textbf{Output:} $q(t)=q^{(K)}(t)$ and $\mathbf{v}(t, \lambda_i)=\mathbf{v}_i^{(K)}(t)$}
	
	\caption{{Darboux Transform to compute $K-$soliton $q(t)$ and its Jost Solutions
	$\mathbf{v}_k(t)\triangleq \mathbf{v}(t, \lambda_k)$, $1\leq k\leq K$ from discrete spectrum $\{(\lambda_k,b_k)\}_{k=1}^K$.}}
	\label{algo:darboux}
\end{algorithm}

\section{The Fourier Collocation (FC) Method}\label{sec:fc}
The FC method~\cite[Sec. 2.4.3]{yang_nlse}, or \textit{spectral method}~\cite[Part II]{mansoor_parts}
finds the
discrete eigenvalues of the NFT of a signal $q(t)$. 
This is done by setting up a matrix eigenvalue problem in the linear frequency domain.
Assume that $q(t)$ is only nonzero in a finite time interval\footnote{A pulse $q(t)$ can be confined in $[-T/2, T/2]$ by a time-shift $t_0$. The spectrum of $q(t-t_0)$ has the same $\lambda_k$ but the $b$-coefficients are $b_k\exp(-j\lambda_k t_0)$.} 
$[-T/2, T/2]$. In this case, we can trivially compute $\mathbf{v}(t,\lambda)$ for $t\notin[-T/2, T/2]$ in terms of $\mathbf{v}(\pm T/2, \lambda)$. We need only to find $\mathbf{v}(t,\lambda)$ for $t\in[-T/2, T/2]$.  

Assume that the periodic extensions with period $T$ of $q(t)$ and $v_i(t,\lambda_k)$ for $t\in[-T/2, T/2]$ are band-limited to the frequency band $[-N/T, N/T]$, where $N$ is an integer. {By performing these periodic extensions}, we can express $q(t)$ and $v_i(t,\lambda_k)$ for $t\in[-T/2, T/2]$ and $i\in\{1,2\}$ by a Fourier series of $M=2N+1$ terms
\begin{equation}
q(t)=\!\sum_{n=-N}^{N}\!c[n] e^{jn\frac{2\pi}{T}t}, \hspace{3pt} v_{i}(t,\lambda_k)=\!\!\sum_{n=-N}^{N}\!\psi_{k,i}[n]e^{jn\frac{2\pi}{T}t}.
\label{eq:fs}
\end{equation}

The Fourier coefficients can  be then computed as a discrete Fourier transform (DFT) of the sampled pulses.
Define $q[m]=q(t_m)$ where $t_m=m\frac{T}{M},m\inset{-N}{N}$. Hence,
\begin{equation}
    c[n]=\frac{1}{M}\sum_{m=-N}^{N} q[m] e^{-j\frac{2\pi}{M}mn}.
\end{equation}
In a similar manner, $\psi_{k,i}[n]$ can be obtained from $v_{k, i}[m]=v_{i}(t_m, \lambda_k)$ for $i\in\{1,2\}$.
Let $\boldsymbol{\uppsi}_k=(\psi_{k,1}[-N],\dots,\psi_{k,1}[N],\psi_{k,2}[-N],\dots,\psi_{k,2}[N])^{ T}$.
Substituting~\eqref{eq:fs} into~\eqref{eq:zs} yields
\begin{equation}
\mathbf{L}\boldsymbol{\uppsi}_k=\lambda_k\boldsymbol{\uppsi}_k
\label{eq:fc}
\end{equation}
where
\begin{equation}
\mathbf{L}=\left(\begin{matrix}\boldsymbol{\Omega} & \boldsymbol{\Gamma} \\ -\boldsymbol{\Gamma}^{\mathrm{H}} & -\boldsymbol{\Omega} \end{matrix}\right),
\label{eq:L}
\end{equation}
$\boldsymbol{\Omega}=-\frac{2\pi}{T}\mathrm{diag}\left(-N,\ldots,N\right)$, and $\mathbf{\Gamma}\inC{M\times M}$ is a Toeplitz matrix whose first column is 
$-j(c[0],\dots,c[N], 0,\dots, 0)^{\mathrm{T}}$
and whose first row is 
$-j(c[0],\dots,c[-N]\,\dots,0)$.

A solution $(\lambda_k,\boldsymbol{\uppsi}_k)$ of \eqref{eq:fc} corresponds to the Fourier coefficients of a solution $\mathbf{v}(t,\lambda_k)$ of  the ZSS~\eqref{eq:zs}. Accordingly,
the eigenvalues of $\mathbf{L}$ include the eigenvalues $\lambda_k$ of the discrete spectrum, their complex conjugates, and $M-2K$ spurious eigenvalues
{which are usually observed with rather small imaginary parts.}

\subsection*{Computation of Spectral Amplitudes $b_k$}
We extend the FC method to also compute the spectral amplitudes $b_k$. %
{
The IDFT of any $\lambda_k$-eigenvector $\boldsymbol{\uppsi}_k$ of \eqref{eq:fc} is an eigenvector of~\eqref{eq:zs}. As any multiple of an eigenvector is also an eigenvector, we have
\begin{equation}
    \sum_{n=-N}^{N}\!\psi_{k,i}[n]e^{jn\frac{2\pi}{T}t}=G_k v_i(t, \lambda_k)
    \label{eq:eigenspace}
\end{equation}
for some constant $G_k$. 
Setting $v_{1}(-T/2,\lambda_k)=\exp(-j\lambda_k (-T/2))$ in~\eqref{eq:eigenspace} to satisfy~\eqref{eq:boundary}, we obtain}
\begin{equation}
G_k=e^{j\lambda_k \left(-\frac{T}{2}\right)}\sum_{n=-N}^{N}\psi_{k, 1}[n]e^{jn\frac{2\pi}{T}\left(-\frac{T}{2}\right)}.
\label{eq:G_k}
\end{equation}
{Note that \eqref{eq:boundary} imposes two boundary conditions. We observed in our simulations that the other condition
$v_2(-T/2, \lambda_k)\approx 0$ in~\eqref{eq:eigenspace} was always numerically satisfied, as expected.
}
{ From~\eqref{eq:ab_lim}, we have $b_k=v_{2}(T/2,\lambda_k)\exp(-j\lambda_k T/2)$. Substituting this in~\eqref{eq:eigenspace}, we have}
\begin{equation}
b_k=\frac{e^{-j\lambda_k \frac{T}{2}}}{G_k}\sum_{n=-N}^{N}\psi_{k,2}[n]e^{jn\frac{2\pi}{T}\frac{T}{2}}.
\label{eq:ab_fromev}
\end{equation}
\begin{figure}[tbp]\centering
	\setlength{\figurewidth}{0.80\columnwidth}
	\setlength{\figureheight}{0.4\figurewidth}
	\input{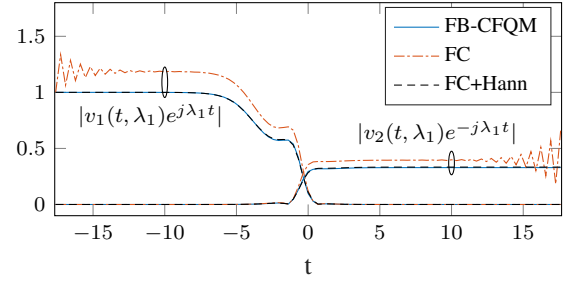}
	\caption{Computation of $|v_1(t, \lambda_1)e^{j\lambda_1 t}|$ and $|v_2(t, \lambda_1)e^{-j\lambda_1 t}|$ for a $2$-soliton ($T=35.34$). Note the ringing artifact of FC at the edges where $G_k$~\eqref{eq:G_k} and $b_k$~\eqref{eq:ab_fromev} are calculated.}
	\label{fig:ringing}
\end{figure}

The above quantities are based on the assumption that the periodic extensions of $v_i(t,\lambda_k)$ for $i\in\{1,2\}$ have a bandwidth smaller than $2N/T$. However, this assumption may not be satisfied if $v_i(-T/2,\lambda_k)\neq v_i(T/2,\lambda_k)$ {(note that, for $T\to\infty$ we have $v_i(-T/2,\lambda_k)= v_i(T/2,\lambda_k)\to 0$, but this holds only approximately for finite $T$)}.
This may cause undesirable ripples on $v_i(t,\lambda_k)$ around $t=\pm T/2$ when $v_i(t,\lambda_k)$ is obtained from \eqref{eq:fs}.
The ripples have a severe impact on the estimation of $G_k=v_1(-T/2, \lambda_k)e^{-j\lambda_k T/2}$ and $b_k=v_2(T/2,\lambda_k)e^{-j\lambda_k T/2}/G_k$. For example, consider a 2-soliton with $\lambda_1=0.6j,\lambda_2=0.3j$
and $b_1=b_2=\frac{1}{3}j$. The signal is truncated to $t\in[-17.67, 17.67]$ and sampled with
$M=103$. We computed
$v_1(t, \lambda_1)e^{j\lambda_1 t}$ and $v_2(t, \lambda_1)e^{-j\lambda_1 t}$ using FB-CFQM and FC.
The results in Fig.~\ref{fig:ringing}
show that the FC generates large ripples {causing further an incorrect estimate of $G_k$}.
We apply two techniques to mitigate the effect of ripples:
\begin{itemize}
	\item \textit{Frequency-domain windowing}: we apply windowing functions $w_1[n]$ and $w_2[n]$ respectively to $\boldsymbol{\uppsi}_{k, 1}$ and $\boldsymbol{\uppsi}_{k, 2}$, the two halves of $\boldsymbol{\uppsi}_k$.
	In our simulations, a Hann windowing function~\cite{blackman1958hann} gave promising results{, as shown by the curves labeled ``FC+Hann'' in Fig.~\ref{fig:ringing}}.
	
	\item \textit{Tail truncation}: we neglect part of the tails of $v_{1}(t, \lambda_k)$ and $v_{k}(t, \lambda_k)$ before computing $G_k$  and $b_k$. This means that we replace $T/2$ with $T_k<T/2$ in~\eqref{eq:G_k}, and~\eqref{eq:ab_fromev}.
\end{itemize}
{In Fig.~\ref{fig:ringing}, frequency-domain windowing seems to completely remove the ringing. For more complicated pulses, such as a $5$-soliton, tail truncation becomes also necessary}. Define the frequency-shifted windows ${u}_{k,1}[n]$ and ${u}_{k,2}[n]$ as
\begin{align}
    u_{k,1}[n] & =w_1[n]e^{-jn\frac{2\pi}{T}T_k} \\
    u_{k,2}[n] & =w_2[n]e^{jn\frac{2\pi}{T}T_k}.
\end{align}

Let $\mathbf{u}_{k,i}=(u_{k,i}[-N],\ldots,u_{k,i}[N])^{\mathrm{T}}$, with $i\in\{1,2\}$. Then, from~\eqref{eq:G_k} and~\eqref{eq:ab_fromev}, $b_k$ becomes
\begin{equation}
b_k=\frac{\sum_{n=-N}^{N}w_2[n]\psi_{k,2}[n]e^{jn\frac{2\pi}{T}T_k}}{\sum_{n=-N}^{N}w_1[n]\psi_{k,1}[n]e^{-jn\frac{2\pi}{T}T_k}}=\frac{\mathbf{u}_{k,2}^{\mathrm{T}}\boldsymbol{\uppsi}_{k,2}}{\mathbf{u}_{k,1}^{\mathrm{T}}\boldsymbol{\uppsi}_{k,1}}.
\label{eq:ab_fromev_win}
\end{equation}
The choice of $T_k$ should avoid the ripples in Fig.~\ref{fig:ringing} while staying close enough to the limit value. We choose $T_k=\min(t_{\exp}, t_{5\%})$, with:%
\begin{itemize}
	\item $t_{\exp}=12/\Im{\lambda_k}$ (such that $e^{\pm j\lambda_k t}$ does not become larger than $e^{12}\approx 1.63\cdot{10}^5$).
	\item $t_{5\%}=T/2-0.05T$ ($5\%$ removed tail). 
\end{itemize}
In our simulations, this heuristic choice works for signals that have most of their energy inside the interval $[-t_{\exp}, t_{\exp}]$, which seems to be the case for, at least, solitons of order up to $5$ with $e^{-\Im{\lambda_k}} <|b_k|< e^{\Im{\lambda_k}}$. This heuristic is based on the condition $|b_k|=1$ for symmetric solitons, which have good energy confinement in time~\cite{span2017symmetric}. For signals known to have high energy outside $[-t_{\exp}, t_{\exp}]$, the value of $t_{\exp}$ can be increased at the cost of accuracy in $b_k$. Note that windowing and truncation are done after obtaining $\lambda_k$ from~\eqref{eq:fc}. Therefore, these two techniques affect only the computation of $b_k$.

\section{Statistics of the Discrete Spectrum}\label{sec:perturbation}

In this section, we derive the second-order statistics 
of the discrete spectrum of a pulse contaminated by additive Gaussian noise.
For this purpose, we apply the well-established perturbation theory of linear operators to the FC method.

Consider again the signal $q(t)$ of pulse duration $T$ contaminated by zero-mean additive Gaussian noise $\sigma\tilde{q}(t)$ that is wide-sense stationary (WSS), i.e., $\mathbb{E}\left[\tilde{q}(t)\tilde{q}^*(t+\tau)\right]=r_q(\tau)$ does not depend on $t$. 
 Here, $\sigma^2$ is chosen such that $\mathbb{E}[|\tilde{q}(t)|^2]=1$. Let us first assume that $\sigma\tilde{q}(t)$ has bandwidth $\mathcal{B}$ and a constant power spectral density (PSD) $N_0$ inside the band. This implies $\sigma^2=N_0\mathcal{B}$. If the signal is sampled at the Nyquist rate, i.e., $\mathcal{B}=M/T$, then we have for $m\inset{-N}{N}$
\begin{equation}
\hat{q}[m]=q[m]+\sigma\tilde{q}[m]
\label{eq:q_pert}
\end{equation}
where $\sigma\tilde{q}[m]$ are the zero-mean noise samples. 
 The DFT coefficients of the noisy pulse are
\begin{align}
    \hat{c}[n]&=\frac{1}{M}\sum_{m=-N}^{N}q[m] e^{-j\frac{2\pi}{M}mn} + \frac{\sigma}{M}\sum_{m=-N}^{N}\tilde{q}[m] e^{-j\frac{2\pi}{M}mn} \nonumber \\
    &=c[n] + \sigma\tilde{c}[n]
\end{align}
where the choice of $\sigma$ implies that $\mathbb{E}[\sum_n|\tilde{c}[n]|^2]=1$. 
Define 
$\tilde{\mathbf{c}}=\left(\tilde{c}[-N],\ldots,\tilde{c}[N]\right)^{\mathrm{T}}$.
The covariance matrix of the vector $(\Re\tilde{\mathbf{c}}^{\mathrm{T}}, \Im\tilde{\mathbf{c}}^{\mathrm{T}})^{\mathrm{T}}$  is
\begin{equation}
    \mathbf{R}_{\tilde{\mathbf{c}}}\triangleq\left[\begin{matrix}\mathbb{E}\left[\Re{\tilde{\mathbf{c}}}\Re{\tilde{\mathbf{c}}}^{\mathrm{T}}\right] & \mathbb{E}\left[\Re{\tilde{\mathbf{c}}}\Im{\tilde{\mathbf{c}}}^{\mathrm{T}}\right] \\ \mathbb{E}\left[\Im{\tilde{\mathbf{c}}}\Re{\tilde{\mathbf{c}}}^{\mathrm{T}}\right] & \mathbb{E}\left[\Im{\tilde{\mathbf{c}}}\Im{\tilde{\mathbf{c}}}^{\mathrm{T}}\right] \end{matrix}\right]=\frac{1}{2M}\mathbf{1}
    \label{eq:R_c}
\end{equation}
where $\mathbf{1}$ is the identity matrix. This is the case of additive white Gaussian noise (AWGN).
In a general case with colored noise, $(\Re\tilde{\mathbf{c}}^{\mathrm{T}}, \Im\tilde{\mathbf{c}}^{\mathrm{T}})^{\mathrm{T}}=\mathbf{G}\mathbf{w}$, where $\mathbf{w}$ is a vector of $2M$ real-valued, i.i.d. Gaussian variables with variance $1/(2M)$ and $\mathbf{G}\inR{2M\times 2M}$ satisfies $\mathrm{tr}(\mathbf{G}\mathbf{G}^{\mathrm{T}})=2M$. Then we have
\begin{equation}
    \mathbf{R}_{\tilde{\mathbf{c}}}=\frac{1}{2M}\mathbf{G}\mathbf{G}^{\mathrm{T}}.
\end{equation}



\subsection{Perturbation of the Discrete Spectrum}

Let $\{(\hat{\lambda}_k, \hat{b}_k)\}$ denote the discrete spectrum of the noisy signal. Using the FC method, they are obtained from the solutions of the eigenvalue problem,
\begin{equation}
\hat{\mathbf{L}}\hat{\boldsymbol{\uppsi}}_k=\hat{\lambda}_k\hat{\boldsymbol{\uppsi}}_k
\label{eq:fc_pert}
\end{equation}
where
\begin{equation}
    \hat{\mathbf{L}}=\mathbf{L}+\sigma\tilde{\mathbf{L}},
\label{eq:L_decompose}
\end{equation}
where $\mathbf{L}$, given in~\eqref{eq:L}, corresponds to the noiseless pulse and 
\begin{equation}
\tilde{\mathbf{L}}=\left(\begin{matrix}
\mathbf{0} & \tilde{\boldsymbol{\Gamma}} \\
-\tilde{\boldsymbol{\Gamma}}^{\mathrm{H}} & \mathbf{0}
\end{matrix}\right)
\label{eq:L_prime}
\end{equation}
where $\tilde{\mathbf{\Gamma}}\inC{M\times M}$ is a Toeplitz matrix whose first column is $-j(\tilde{c}[0]\ \dots\ \tilde{c}[N]\ 0\ \dots\ 0)^{\mathrm{T}}$
and whose first row is $-j(\tilde{c}[0]\ \dots\ \tilde{c}[-N]\ 0\ \dots\ 0)$.
When $\sigma^2$ is relatively small, $(\hat{\lambda}_k, \hat{b}_k)$ can be approximated by first-order perturbations
\begin{align}
    \hat{\lambda}_k&=\lambda_k + \sigma\tilde{\lambda}_k + \mathcal{O}(\sigma^2)\label{eq:lam_per}\\
    \hat{\boldsymbol{\uppsi}}_k&=\boldsymbol{\uppsi}_k + \sigma\tilde{\boldsymbol{\uppsi}}_k+\mathcal{O}(\sigma^2)\label{eq:psi_per}\\
    \hat{b}_k &= b_k + \sigma \tilde{b}_k  + \mathcal{O}(\sigma^2). \label{eq:b_per}
\end{align}
The perturbation analysis of eigenvalues and eigenvectors is a mature topic. 
A detailed analysis is given in~\cite[Sections II.1 and II.2]{kato_eigenvalues}. 
We enclose some relevant first-order results as Theorem~\ref{thm:1}. 
We first define the left eigenvector $\boldsymbol{\upphi}_k$ as
\begin{equation}
\boldsymbol{\upphi}_k^{\mathrm{H}}\mathbf{L}=\lambda_k\boldsymbol{\upphi}_k^{\mathrm{H}}.
\label{eq:left}
\end{equation}
For our operator $\mathbf{L}$, $\boldsymbol{\upphi}_k$ is equal to the flipped, conjugated version of the right eigenvector $\boldsymbol{\uppsi}_k$, i.e.
\begin{equation}
\boldsymbol{\upphi}_k=\left(\psi_{k,2}^*[N],\dots,\psi_{k,2}^*[-N],\psi_{k,1}^*[N],\dots,\psi_{k,1}^*[-N]\right)^{\mathrm{T}}.
\label{eq:flipped}
\end{equation}
The reason is that $\boldsymbol{\upphi}_k$ corresponds to the (right) eigenvector of $\lambda_k^*$ for the signal $-q(t)$.
This can be seen by taking the conjugate transpose from both sides of \eqref{eq:left}. Moreover,
if $(\lambda_k, \left(v_{k,1}\ \ v_{k,2}\right)^{\mathrm{T}})$ is a solution of \eqref{eq:zs} for $q(t)$, then
$(\lambda_k^*,(v_{k,2}^*\ \ v_{k,1}^*)^{\mathrm{T}})$ is a solution of \eqref{eq:zs} for $-q(t)$. 
Combining these two properties concludes \eqref{eq:flipped}.

\begin{theorem}\label{thm:1} Consider \eqref{eq:L_decompose} with the condition that $\|\sigma\tilde{\mathbf{L}}\|_{\mathrm{F}}\ll\|\mathbf{L}\|_{\mathrm{F}}$ ($\|\cdot\|_{\mathrm{F}}$ is the Frobenius norm). We define for every eigenvalue and eigenvectors
$(\lambda_k,\boldsymbol{\uppsi}_k,\boldsymbol{\upphi}_k)$ of $\mathbf{L}$
\begin{align}
    \mathbf{P}_k&=\frac{1}{\boldsymbol{\upphi}_k^{\mathrm{H}}\boldsymbol{\uppsi}_k}\boldsymbol{\uppsi}_k \boldsymbol{\upphi}_k^{\mathrm{H}}\label{eq:eigproj_eig2}\\
\mathbf{S}_k&=\left(\mathbf{L}-\lambda_k\mathbf{1}-\mathbf{P}_k\right)^{-1}\left(\mathbf{1}-\mathbf{P}_k\right).
\label{eq:S}
\end{align}
The matrix $\mathbf{P}_k$ is called the \textit{eigenprojector} of $\lambda_k$
and $\mathbf{S}_k$ is the \textit{Drazin inverse}~\cite{rothblum_expansions} of  $\mathbf{L}-\lambda_k \mathbf{1}$.
If $\lambda_k$ has algebraic multiplicity 1,
then its first-order perturbations \eqref{eq:lam_per} and \eqref{eq:psi_per} are given by,
\begin{align}
    \label{eq:lambda_pert2}
\tilde{\lambda}_k &=\frac{1}{\boldsymbol{\upphi}_k^{\mathrm{H}}\boldsymbol{\uppsi}_k} \boldsymbol{\upphi}_k^{\mathrm{H}} \tilde{\mathbf{L}}\boldsymbol{\uppsi}_k\\
\tilde{\boldsymbol{\uppsi}}_k &=-\mathbf{S}_k\tilde{\mathbf{L}}\boldsymbol{\uppsi}_k.
\label{eq:psi_hat}
\end{align}
\end{theorem}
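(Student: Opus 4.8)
The plan is to treat Theorem~\ref{thm:1} as an instance of classical analytic perturbation theory for a simple (hence semisimple) eigenvalue, so that the argument reduces to matching powers of $\sigma$. First I would substitute the expansions \eqref{eq:lam_per} and \eqref{eq:psi_per} into the perturbed eigenproblem \eqref{eq:fc_pert}, insert $\hat{\mathbf{L}}=\mathbf{L}+\sigma\tilde{\mathbf{L}}$ from \eqref{eq:L_decompose}, and collect terms order by order in $\sigma$. The $\sigma^0$ term just reproduces $\mathbf{L}\boldsymbol{\uppsi}_k=\lambda_k\boldsymbol{\uppsi}_k$, while the $\sigma^1$ term gives the linear equation
\begin{equation}
(\mathbf{L}-\lambda_k\mathbf{1})\tilde{\boldsymbol{\uppsi}}_k=\tilde{\lambda}_k\boldsymbol{\uppsi}_k-\tilde{\mathbf{L}}\boldsymbol{\uppsi}_k .
\label{eq:planfirstorder}
\end{equation}

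Second, to isolate $\tilde{\lambda}_k$ I would left-multiply \eqref{eq:planfirstorder} by the left eigenvector $\boldsymbol{\upphi}_k^{\mathrm H}$ defined in \eqref{eq:left}. Since $\boldsymbol{\upphi}_k^{\mathrm H}(\mathbf{L}-\lambda_k\mathbf{1})=\mathbf{0}$, the left-hand side vanishes and one is left with $\tilde{\lambda}_k\,\boldsymbol{\upphi}_k^{\mathrm H}\boldsymbol{\uppsi}_k=\boldsymbol{\upphi}_k^{\mathrm H}\tilde{\mathbf{L}}\boldsymbol{\uppsi}_k$. Here I would use that algebraic multiplicity one forces $\lambda_k$ to be non-defective, so $\boldsymbol{\upphi}_k^{\mathrm H}\boldsymbol{\uppsi}_k\ne 0$ and the eigenprojector $\mathbf{P}_k$ in \eqref{eq:eigproj_eig2} is well defined with $\mathbf{L}\mathbf{P}_k=\mathbf{P}_k\mathbf{L}=\lambda_k\mathbf{P}_k$; division then yields \eqref{eq:lambda_pert2}, equivalently $\tilde{\lambda}_k=\mathrm{tr}(\mathbf{P}_k\tilde{\mathbf{L}})$.

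Third, for the eigenvector correction I would note that \eqref{eq:planfirstorder} determines $\tilde{\boldsymbol{\uppsi}}_k$ only up to an additive multiple of $\boldsymbol{\uppsi}_k$, which is the usual rescaling freedom; the convention consistent with the stated formula is $\mathbf{P}_k\tilde{\boldsymbol{\uppsi}}_k=\mathbf{0}$. Under this convention I would apply $\mathbf{S}_k$ from \eqref{eq:S} to \eqref{eq:planfirstorder} and use the defining identities of the reduced resolvent, namely $\mathbf{S}_k(\mathbf{L}-\lambda_k\mathbf{1})=\mathbf{1}-\mathbf{P}_k$, $\mathbf{S}_k\mathbf{P}_k=\mathbf{0}$, and (since $\boldsymbol{\uppsi}_k\in\mathrm{range}\,\mathbf{P}_k$) $\mathbf{S}_k\boldsymbol{\uppsi}_k=\mathbf{0}$. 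The left side then collapses to $(\mathbf{1}-\mathbf{P}_k)\tilde{\boldsymbol{\uppsi}}_k=\tilde{\boldsymbol{\uppsi}}_k$, the $\tilde{\lambda}_k\mathbf{S}_k\boldsymbol{\uppsi}_k$ term on the right drops, and we obtain $\tilde{\boldsymbol{\uppsi}}_k=-\mathbf{S}_k\tilde{\mathbf{L}}\boldsymbol{\uppsi}_k$, which is \eqref{eq:psi_hat}. I would also check in passing that $\mathbf{S}_k$ as in \eqref{eq:S} really is the Drazin inverse of $\mathbf{L}-\lambda_k\mathbf{1}$: this follows from $\mathbf{P}_k(\mathbf{L}-\lambda_k\mathbf{1})=\mathbf{0}$ together with the invertibility of $\mathbf{L}-\lambda_k\mathbf{1}$ on $\mathrm{range}(\mathbf{1}-\mathbf{P}_k)$.

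The one ingredient that is not pure linear algebra is the legitimacy of the $\mathcal{O}(\sigma^2)$-expansions themselves, i.e.\ that $\hat{\lambda}_k$ and a suitably normalized $\hat{\boldsymbol{\uppsi}}_k$ are analytic in $\sigma$ near $0$. This is exactly where the hypothesis $\|\sigma\tilde{\mathbf{L}}\|_{\mathrm F}\ll\|\mathbf{L}\|_{\mathrm F}$ and the simplicity of $\lambda_k$ enter: for $\sigma$ small enough, a single eigenvalue of $\hat{\mathbf{L}}$ stays isolated inside a fixed small contour $\Gamma$ around $\lambda_k$, so the Riesz projector $\hat{\mathbf{P}}_k=\tfrac{1}{2\pi j}\oint_{\Gamma}(\zeta\mathbf{1}-\hat{\mathbf{L}})^{-1}\diff{\zeta}$ is analytic in $\sigma$, and expanding the resolvent in a Neumann series reproduces the first-order formulas above. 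I expect this analyticity/convergence bookkeeping to be the only ``obstacle,'' and I would dispatch it by citing \cite[Sections~II.1 and II.2]{kato_eigenvalues} rather than reproving it; the remaining delicacy is merely keeping track of the normalization $\mathbf{P}_k\tilde{\boldsymbol{\uppsi}}_k=\mathbf{0}$, since any other choice alters \eqref{eq:psi_hat} by a multiple of $\boldsymbol{\uppsi}_k$.
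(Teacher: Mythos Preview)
Your proposal is correct, but it takes a more self-contained route than the paper. The paper's proof is essentially proof by citation: it invokes \cite[Ch.~II, Eqs.~(2.21) and (2.33)]{kato_eigenvalues} directly for $\tilde{\lambda}_k=\mathrm{tr}(\tilde{\mathbf{L}}\mathbf{P}_k)$, then uses \cite{rothblum1976eigenprojector} to reduce the contour-integral eigenprojector to the rank-one form \eqref{eq:eigproj_eig2}, and finally applies the cyclic trace identity to obtain \eqref{eq:lambda_pert2}; for \eqref{eq:psi_hat} it simply cites \cite[Ch.~II, Eq.~(4.23)]{kato_eigenvalues}. You instead expand the perturbed eigenproblem order by order, kill the left-hand side with $\boldsymbol{\upphi}_k^{\mathrm H}$ to read off $\tilde{\lambda}_k$, and then solve for $\tilde{\boldsymbol{\uppsi}}_k$ via the reduced-resolvent identities of $\mathbf{S}_k$ under the normalization $\mathbf{P}_k\tilde{\boldsymbol{\uppsi}}_k=\mathbf{0}$. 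What your approach buys is transparency---the reader sees exactly how simplicity of $\lambda_k$ enters (through $\boldsymbol{\upphi}_k^{\mathrm H}\boldsymbol{\uppsi}_k\neq 0$ and the identities $\mathbf{S}_k(\mathbf{L}-\lambda_k\mathbf{1})=\mathbf{1}-\mathbf{P}_k$, $\mathbf{S}_k\boldsymbol{\uppsi}_k=\mathbf{0}$) and why the eigenvector formula requires a gauge choice---at the cost of a few more lines; the paper's approach buys brevity. Since you still defer the analyticity of $\hat{\lambda}_k,\hat{\boldsymbol{\uppsi}}_k$ in $\sigma$ to Kato, the two proofs ultimately rest on the same external input.
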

\begin{proof} Under the conditions that $\|\sigma\tilde{\mathbf{L}}\|_{\mathrm{F}}\ll\|\mathbf{L}\|_{\mathrm{F}}$ and $\lambda_k$ has multiplicity 1, it is shown
in \cite[Ch. II, Eqs. (2.21) and (2.33)]{kato_eigenvalues}
\begin{equation}
\tilde{\lambda}_k=\mathrm{tr}\left(\tilde{\mathbf{L}}\mathbf{P}_k\right)
\label{eq:lambda_pert}
\end{equation}
and $\mathbf{P}_k$ is given based on a contour integral~\cite[Ch. I, Eq. (5.22)]{kato_eigenvalues} and further simplified in \cite{rothblum1976eigenprojector} to \eqref{eq:eigproj_eig2}. Substituting~\eqref{eq:eigproj_eig2} in~\eqref{eq:lambda_pert} and using $\mathrm{tr}(\tilde{\mathbf{L}} \boldsymbol{\uppsi}_k \boldsymbol{\upphi}_k^{\mathrm{H}} )=\mathrm{tr}(\boldsymbol{\upphi}_k^{\mathrm{H}} \tilde{\mathbf{L}} \boldsymbol{\uppsi}_k)$ results in \eqref{eq:lambda_pert2}. The proof of \eqref{eq:psi_hat} is also given in~\cite[Ch. II, Eq. (4.23)]{kato_eigenvalues}.
\end{proof}
{\begin{remark}
We assumed that all $\lambda_k$ have multiplicity 1, which is generally assumed when considering NFT for communication. Eigenvalues with higher multiplicities are possible~\cite{olmedilla1987multiple} and could be used for communication~\cite{garcia2018communication}. Noise would make these eigenvalues split~\cite{kato_eigenvalues}, complicating the analysis.
\end{remark}
} 

{To compute the perturbation $\tilde{b}_k$, we apply a Taylor expansion around $\sigma=0$ to the perturbed version of~\eqref{eq:ab_fromev_win}:}
\begin{align}
    \hat{b}_k&=\frac{\mathbf{u}_{k,2}^{\mathrm{T}}\boldsymbol{\uppsi}_{k,2} +\sigma \mathbf{u}_{k,2}^{\mathrm{T}}\tilde{\boldsymbol{\uppsi}}_{k,2}+\mathcal{O}(\sigma^2) }{\mathbf{u}_{k,1}^{\mathrm{T}}\boldsymbol{\uppsi}_{k,1}+\sigma\mathbf{u}_{k,1}^{\mathrm{T}}\tilde{\boldsymbol{\uppsi}}_{k,1}+\mathcal{O}(\sigma^2)}\\
    &= \frac{\mathbf{u}_{k,2}^{\mathrm{T}}\boldsymbol{\uppsi}_{k,2}}{\mathbf{u}_{k,1}^{\mathrm{T}}\boldsymbol{\uppsi}_{k,1}}
    \left(1 + 
    \sigma \frac{\mathbf{u}_{k,2}^{\mathrm{T}}\tilde{\boldsymbol{\uppsi}}_{k,2}}{\mathbf{u}_{k,2}^{\mathrm{T}}\boldsymbol{\uppsi}_{k,2}}-\sigma
    \frac{\mathbf{u}_{k,1}^{\mathrm{T}}\tilde{\boldsymbol{\uppsi}}_{k,1}}{\mathbf{u}_{k,1}^{\mathrm{T}}\boldsymbol{\uppsi}_{k,1}}\right) + \mathcal{O}(\sigma^2).\nonumber
\end{align}
Using~\eqref{eq:b_per}, $\tilde{b}_k$ is given in terms of $b_k$ and $\mathbf{u}_{k,1}^{\mathrm{T}}\boldsymbol{\uppsi}_{k,1}$ as
\begin{equation}
    \tilde{b}_k = \frac{1}{\mathbf{u}_{k,1}^{\mathrm{T}}\boldsymbol{\uppsi}_{k,1}}\left(
    \mathbf{u}_{k,2}^{\mathrm{T}}\tilde{\boldsymbol{\uppsi}}_{k,2}-
    b_k\mathbf{u}_{k,1}^{\mathrm{T}}\tilde{\boldsymbol{\uppsi}}_{k,1}
    \right)
    \label{eq:b_k_tilde}
\end{equation}
or equivalently,
\begin{equation}
    \tilde{b}_k = \frac{1}{\mathbf{u}_{k,1}^{\mathrm{T}}\boldsymbol{\uppsi}_{k,1}}
    \left(b_k\mathbf{u}_{k,1}^{\mathrm{T}}, -\mathbf{u}_{k,2}^{\mathrm{T}}\right) \mathbf{S}_k\tilde{\mathbf{L}}\boldsymbol{\uppsi}_k.
    \label{eq:b_k_tilde2}
\end{equation}
Both $\tilde{\lambda}_k$ and $\tilde{b}_k$ depend on $\tilde{\mathbf{L}}\boldsymbol{\uppsi}_k$.
From \eqref{eq:L_prime}
we can write 
\begin{equation}
    \tilde{\mathbf{L}}\boldsymbol{\uppsi}_k=
    \left(\begin{matrix}
    \tilde{\mathbf{\Gamma}}\boldsymbol{\uppsi}_{k, 2}\\
    -\tilde{\mathbf{\Gamma}}^{\mathrm{H}}\boldsymbol{\uppsi}_{k, 1}
    \end{matrix}\right).
    \label{eq:Lpsi_0}
\end{equation}
The first vector $\tilde{\mathbf{\Gamma}}\boldsymbol{\uppsi}_{k, 2}$
is 
the convolution of $-j\tilde{c}[n]$ and $\uppsi_{k, 2}[n]$. 
From the commutative property of convolution, i.e. $-j\tilde{c}[n]*\uppsi_{k, 2}[n]=-j
\uppsi_{k, 2}[n]*\tilde{c}[n]$, it can be reordered as
\begin{equation}
\tilde{\mathbf{\Gamma}}\boldsymbol{\uppsi}_{k, 2} = -j\mathbf{J}_{k,2}\tilde{\mathbf{c}}
\end{equation}
where $\mathbf{J}_{k,2}\inC{M\times M}$ is a Toeplitz matrix whose first column is $(\psi_{k,2}[0]\ \dots\ \psi_{k,2}[N]\ 0\ \dots\ 0)^{\mathrm{T}}$ and whose first row is $(\psi_{k,2}[0]\ \dots\ \psi_{k,2}[-N]\ 0\ \dots\ 0)$. Similarly,
\begin{equation}
-\tilde{\mathbf{\Gamma}}^{\mathrm{H}}\boldsymbol{\uppsi}_{k, 1} = -j\mathbf{J}_{k,1}\boldsymbol{\Pi}\tilde{\mathbf{c}}^*
\end{equation}
where $\mathbf{J}_{k,1}$ is defined similarly to $\mathbf{J}_{k,2}$, and $\boldsymbol{\Pi}$ is an order-reversing matrix, i.e., an anti-diagonal matrix with the anti-diagonal elements equal to one. Equation~\eqref{eq:Lpsi_0} becomes
\begin{equation}
     \tilde{\mathbf{L}}\boldsymbol{\uppsi}_k=-j
    \left(\begin{matrix}
    \mathbf{J}_{k,2}\tilde{\mathbf{c}}\\
    \mathbf{J}_{k,1}\boldsymbol{\Pi}\tilde{\mathbf{c}}^*
    \end{matrix}\right)=\boldsymbol{\Sigma}_k
    \left(\begin{matrix}
    \Re{\tilde{\mathbf{c}}}\\
    \Im{\tilde{\mathbf{c}}}
    \end{matrix}\right)
    \label{eq:Lpsi}
\end{equation}
with
\begin{equation}
    \boldsymbol{\Sigma}_k=\left(\begin{matrix}
    -j\mathbf{J}_{k,2}&& \mathbf{J}_{k,2}\\
    -j\mathbf{J}_{k,1}\boldsymbol{\Pi}&& -\mathbf{J}_{k,1}\boldsymbol{\Pi}
    \end{matrix}\right).
\end{equation}

\subsection{Statistics of $\tilde{\lambda}_k$}\label{sec:stat_lambda}

Consider the perturbation term $\tilde{\lambda}_k$ in \eqref{eq:lambda_pert2}. Using \eqref{eq:Lpsi},
\begin{equation}
\tilde{\lambda}_k =\frac{1}{\boldsymbol{\upphi}_k^{\mathrm{H}}\boldsymbol{\uppsi}_k} 
\boldsymbol{\upphi}_k^{\mathrm{H}} \boldsymbol{\Sigma}_k \left(\begin{matrix}
    \Re{\tilde{\mathbf{c}}}\\
    \Im{\tilde{\mathbf{c}}}
    \end{matrix}\right)=\mathbf{d}_{k}\left(\begin{matrix}
    \Re{\tilde{\mathbf{c}}}\\
    \Im{\tilde{\mathbf{c}}}
    \end{matrix}\right)
\end{equation}
{where $\mathbf{d_k}$ is a horizontal vector defined as}
\begin{equation}
    \mathbf{d}_{k} = \frac{1}{\boldsymbol{\upphi}_k^{\mathrm{H}}\boldsymbol{\uppsi}_k} \boldsymbol{\upphi}_k^{\mathrm{H}} \boldsymbol{\Sigma}_k =
    \frac{1}{\boldsymbol{\uppsi}_k^{\mathrm{T}}\mathbf{\Pi}\boldsymbol{\uppsi}_k} \boldsymbol{\uppsi}_k^{\mathrm{T}}\mathbf{\Pi} \boldsymbol{\Sigma}_k.
\end{equation}
{The second equality uses $\boldsymbol{\upphi}_k^{\mathrm{H}}=\boldsymbol{\uppsi}_{k}^{\mathrm{T}}\mathbf{\Pi}$ (see~\eqref{eq:flipped})}. Let $\tilde{\boldsymbol{\lambda}}=(\tilde{\lambda}_1,\dots,\tilde{\lambda}_K)^{\mathrm{T}}$, and let $\mathbf{D}$ be a matrix whose rows are the $\mathbf{d}_{k}$. Then we have $\tilde{\boldsymbol{\lambda}} = \mathbf{D} \left(\begin{matrix}
    \Re{\tilde{\mathbf{c}}}\\
    \Im{\tilde{\mathbf{c}}}
    \end{matrix}\right)$, or
\begin{equation}
    \left(\begin{matrix}\Re{\tilde{\boldsymbol{\lambda}}}\\ \Im{\tilde{\boldsymbol{\lambda}}}\end{matrix}\right)=
    \left(\begin{matrix}\Re{\mathbf{D}}\\ \Im{\mathbf{D}}\end{matrix}\right)
\left(\begin{matrix}
\Re{\tilde{\mathbf{c}}} \\ \Im{\tilde{\mathbf{c}}}
\end{matrix}\right).
\label{eq:l_pert_linear}
\end{equation}
{The matrix $\mathbf{D}$ contains the normalized autocorrelation functions of the FC eigenvectors $\boldsymbol{\uppsi}_k$ (see~\cite{garcia2018statistics} for an alternative formulation). In time domain, these autocorrelation functions become the squared Jost solutions, which is in accordance with Eq. (47) of~\cite{prati2018some}. The dependence of the perturbation of the eigenvalues on the squared Jost solutions is well known~\cite{kaup1976perturbation}.}

From~\eqref{eq:l_pert_linear}, since $(\Re{\tilde{\mathbf{c}}}^{\mathrm{T}},\Im{\tilde{\mathbf{c}}}^{\mathrm{T}})^{\mathrm{T}}$ has a jointly Gaussian distribution,
$(\Re{\tilde{\boldsymbol{\lambda}}}^{\mathrm{T}},\Im{\tilde{\boldsymbol{\lambda}}}^{\mathrm{T}})^{\mathrm{T}}$ has a jointly Gaussian distribution
with zero mean and covariance matrix,
\begin{equation}
\mathbf{C}_{\tilde{\boldsymbol{\lambda}}}= \left(\begin{matrix}\Re{\mathbf{D}}\\ \Im{\mathbf{D}}\end{matrix}\right)
\mathbf{R}_{\tilde{\mathbf{c}}}
\left(\Re{\mathbf{D}^{\mathrm{T}}}\\ \Im{\mathbf{D}^{\mathrm{T}}}\right).
\label{eq:C_ll}
\end{equation}

\subsection{Statistics of $\tilde{b}_k$ coefficients}

Like $\tilde{\lambda}_k$, $\tilde{b}_k$ is a linear combination of $\Re{\tilde{c}[n]}$ and $\Im{\tilde{c}[n]}$. To see this more clearly,
define
\begin{equation*}
    \mathbf{h}_{k}=\frac{1}{\mathbf{u}_{k,1}^{\mathrm{T}}\boldsymbol{\uppsi}_{k,1}}
    \left(b_k\mathbf{u}_{k,1}^{\mathrm{T}}, -\mathbf{u}_{k,2}^{\mathrm{T}}\right) \mathbf{S}_k\boldsymbol{\Sigma}_k.
\end{equation*}
Using \eqref{eq:b_k_tilde2}, we have simply,
\begin{equation}
    \tilde{b}_k = \mathbf{h}_{k} \left(\begin{matrix}
    \Re{\tilde{\mathbf{c}}}\\
    \Im{\tilde{\mathbf{c}}}
    \end{matrix}\right).
\end{equation}
Let $\tilde{\mathbf{b}}=(\tilde{b}_1,\ldots,\tilde{b}_K)^{\mathrm{T}}$, and let $\mathbf{H}$ be a matrix whose rows are the $\mathbf{h}_{k}$ Then we have $\tilde{\mathbf{b}} = \mathbf{H}\left(\begin{matrix}
\Re{\tilde{\mathbf{c}}} \\ \Im{\tilde{\mathbf{c}}}
\end{matrix}\right)$, or
\begin{equation}
    \left(\begin{matrix}\Re{\tilde{\mathbf{b}}}\\ \Im{\tilde{\mathbf{b}}}\end{matrix}\right)=
    \left(\begin{matrix}\Re{\mathbf{H}}\\ \Im{\mathbf{H}}\end{matrix}\right)
\left(\begin{matrix}
\Re{\tilde{\mathbf{c}}} \\ \Im{\tilde{\mathbf{c}}}
\end{matrix}\right).
\label{eq:b_pert_linear}
\end{equation}
Thus, $(\Re{\tilde{\mathbf{b}}}^{\mathrm{T}},\Im{\tilde{\mathbf{b}}}^{\mathrm{T}})^{\mathrm{T}}$ has a jointly Gaussian distribution
with zero mean and covariance matrix,
\begin{equation}
\mathbf{C}_{\tilde{\mathbf{b}}}=
\left(\begin{matrix}\Re{\mathbf{H}}\\ \Im{\mathbf{H}}\end{matrix}\right)
\mathbf{R}_{\tilde{\mathbf{c}}}\left(\Re{\mathbf{H}^{\mathrm{T}}},\Im{\mathbf{H}^{\mathrm{T}}}\right).
\label{eq:C_bb}
\end{equation}

{\subsection{Cross-statistics of $\tilde{\lambda}_k$ and $\tilde{b}_k$}
From~\eqref{eq:l_pert_linear} and~\eqref{eq:b_pert_linear}, we obtain the cross-covariance matrix $\mathbf{C}_{\tilde{\boldsymbol{\lambda}}\tilde{\mathbf{b}}}\triangleq\mathbb{E}[(\Re{\tilde{\boldsymbol{\lambda}}}^{\mathrm{T}},\Im{\tilde{\boldsymbol{\lambda}}}^{\mathrm{T}})^{\mathrm{T}}(\Re{\tilde{\mathbf{b}}}^{\mathrm{T}},\Im{\tilde{\mathbf{b}}}^{\mathrm{T}})]$:
\begin{equation}
    \mathbf{C}_{\tilde{\boldsymbol{\lambda}}\tilde{\mathbf{b}}}=\left(\begin{matrix}\Re{\mathbf{D}}\\ \Im{\mathbf{D}}\end{matrix}\right)
\mathbf{R}_{\tilde{\mathbf{c}}}\left(\Re{\mathbf{H}^{\mathrm{T}}},\Im{\mathbf{H}^{\mathrm{T}}}\right).
\label{eq:C_lb}
\end{equation}\begin{corollary*} \label{thm:C_ll}
Up to the first-order approximation in $\sigma$,
the eigenvalues $\hat{\lambda}_k$ of the noisy signal have a joint Gaussian distribution with means $\lambda_k$ and covariance matrix $\sigma^2\mathbf{C}_{\tilde{\boldsymbol{\lambda}}}$. The spectral coefficients
$\hat{b}_k$ have a joint Gaussian distribution with means $b_k$ covariance matrix $\sigma^2\mathbf{C}_{\tilde{\mathbf{b}}}$. The cross-covariances between the $\hat{\lambda}_k$ and the $\hat{b}_k$ are $\sigma^2 \mathbf{C}_{\tilde{\boldsymbol{\lambda}}\tilde{\mathbf{b}}}$.
\end{corollary*}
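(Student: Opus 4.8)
The plan is to assemble the joint Gaussian statement by composing the linear representations already established, and then to address the discarded $\mathcal{O}(\sigma^2)$ terms and the simplicity hypothesis. First I would invoke Theorem~\ref{thm:1} together with the Taylor expansion leading to~\eqref{eq:b_k_tilde2} to write the perturbed discrete spectrum as $\hat{\lambda}_k=\lambda_k+\sigma\tilde{\lambda}_k+\mathcal{O}(\sigma^2)$ and $\hat{b}_k=b_k+\sigma\tilde{b}_k+\mathcal{O}(\sigma^2)$; these expansions hold in the regime $\|\sigma\tilde{\mathbf{L}}\|_{\mathrm{F}}\ll\|\mathbf{L}\|_{\mathrm{F}}$ and under the assumption (the Remark) that each $\lambda_k$ is a simple eigenvalue of $\mathbf{L}$, which is also what makes $\mathbf{P}_k$, $\mathbf{S}_k$, and hence $\mathbf{d}_{k}$ and $\mathbf{h}_{k}$ well defined. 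After dropping the second-order remainders, what is left is to determine the law of the first-order terms $\tilde{\lambda}_k$ and $\tilde{b}_k$.

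Second, I would stack all the first-order terms into one real random vector and exhibit it as a single linear image of $(\Re\tilde{\mathbf{c}}^{\mathrm{T}},\Im\tilde{\mathbf{c}}^{\mathrm{T}})^{\mathrm{T}}$. Concatenating~\eqref{eq:l_pert_linear} and~\eqref{eq:b_pert_linear} gives
\begin{equation*}
\left(\begin{matrix}\Re\tilde{\boldsymbol{\lambda}}\\ \Im\tilde{\boldsymbol{\lambda}}\\ \Re\tilde{\mathbf{b}}\\ \Im\tilde{\mathbf{b}}\end{matrix}\right)=\left(\begin{matrix}\Re\mathbf{D}\\ \Im\mathbf{D}\\ \Re\mathbf{H}\\ \Im\mathbf{H}\end{matrix}\right)\left(\begin{matrix}\Re\tilde{\mathbf{c}}\\ \Im\tilde{\mathbf{c}}\end{matrix}\right).
\end{equation*}
Since $(\Re\tilde{\mathbf{c}}^{\mathrm{T}},\Im\tilde{\mathbf{c}}^{\mathrm{T}})^{\mathrm{T}}$ is zero-mean jointly Gaussian with covariance $\mathbf{R}_{\tilde{\mathbf{c}}}$ --- equal to $\tfrac{1}{2M}\mathbf{1}$ in the white case~\eqref{eq:R_c} and to $\tfrac{1}{2M}\mathbf{G}\mathbf{G}^{\mathrm{T}}$ in general --- any linear image of it is again zero-mean jointly Gaussian. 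Hence the stacked vector is zero-mean Gaussian, and its $4K \times 4K$ covariance matrix is $\left(\begin{matrix}\Re\mathbf{D}\\ \Im\mathbf{D}\\ \Re\mathbf{H}\\ \Im\mathbf{H}\end{matrix}\right)\mathbf{R}_{\tilde{\mathbf{c}}}\left(\begin{matrix}\Re\mathbf{D}^{\mathrm{T}}&\Im\mathbf{D}^{\mathrm{T}}&\Re\mathbf{H}^{\mathrm{T}}&\Im\mathbf{H}^{\mathrm{T}}\end{matrix}\right)$; its diagonal blocks are precisely $\mathbf{C}_{\tilde{\boldsymbol{\lambda}}}$ of~\eqref{eq:C_ll} and $\mathbf{C}_{\tilde{\mathbf{b}}}$ of~\eqref{eq:C_bb}, and its off-diagonal block is $\mathbf{C}_{\tilde{\boldsymbol{\lambda}}\tilde{\mathbf{b}}}$ of~\eqref{eq:C_lb}.

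Third, I would transport this back to $\hat{\lambda}_k,\hat{b}_k$: scaling a zero-mean Gaussian vector by $\sigma$ multiplies its covariance by $\sigma^2$, and adding the deterministic offsets $\lambda_k$ and $b_k$ only shifts the mean. This yields, to first order in $\sigma$, that $(\Re\hat{\boldsymbol{\lambda}}^{\mathrm{T}},\Im\hat{\boldsymbol{\lambda}}^{\mathrm{T}},\Re\hat{\mathbf{b}}^{\mathrm{T}},\Im\hat{\mathbf{b}}^{\mathrm{T}})^{\mathrm{T}}$ is jointly Gaussian with means $\lambda_k,b_k$ and covariance blocks $\sigma^2\mathbf{C}_{\tilde{\boldsymbol{\lambda}}}$, $\sigma^2\mathbf{C}_{\tilde{\mathbf{b}}}$, and $\sigma^2\mathbf{C}_{\tilde{\boldsymbol{\lambda}}\tilde{\mathbf{b}}}$, which is the assertion.

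The step needing the most care --- the main obstacle --- is the legitimacy of discarding the $\mathcal{O}(\sigma^2)$ terms, so I would state the result explicitly as a first-order (small-$\sigma$) approximation rather than an exact identity, relying on the convergence of Kato's perturbation series~\cite{kato_eigenvalues} in the stated norm regime. A related subtlety is that Theorem~\ref{thm:1} requires each $\lambda_k$ to be simple: this is assumed for the noiseless $\mathbf{L}$, and the eigenvalues of $\hat{\mathbf{L}}$ stay simple and mutually distinct --- so that $\mathbf{S}_k$ and the maps $\mathbf{D},\mathbf{H}$ remain well defined --- with probability approaching one as $\sigma\to0$, since eigenvalues depend continuously on $\hat{\mathbf{L}}$ and the unperturbed ones are isolated. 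Everything else is linear algebra plus the stability of jointly Gaussian laws under affine maps.
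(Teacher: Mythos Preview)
Your proposal is correct and follows essentially the same approach as the paper: the corollary is stated there without a separate proof because it is an immediate consequence of the linear representations~\eqref{eq:l_pert_linear} and~\eqref{eq:b_pert_linear} together with the Gaussianity of $(\Re\tilde{\mathbf{c}}^{\mathrm{T}},\Im\tilde{\mathbf{c}}^{\mathrm{T}})^{\mathrm{T}}$, and you have simply made that implicit step explicit by stacking the two linear maps and reading off the block covariance.
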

}

\section{Numerical validation}\label{sec:numerical}

\begin{figure*}[!t]\centering
	\setlength{\figurewidth}{0.29\textwidth}
	\setlength{\figureheight}{0.5\figurewidth}
%
%
\definecolor{mycolor1}{rgb}{0.00000,0,0}%
\begin{tikzpicture}
\begin{axis}[%
width=0.951\figurewidth,
height=\figureheight,
at={(0\figurewidth,0\figureheight)},
scale only axis,
xmode=log,
xmin=47,
xmax=1455,
xminorticks=true,
xlabel style={font=\color{white!15!black}},
ymode=log,
ymin=1e-13,
ymax=500,
ytick={1e0,1e-1, 1e-2, 1e-3, 1e-4, 1e-5, 1e-6, 1e-7, 1e-8, 1e-9, 1e-10, 1e-11, 1e-12},
yticklabels={0,, -2,, -4,, -6, ,-8,,-10,,-12},
yminorticks=true,
ylabel style={font=\color{white!15!black}},
ylabel={$\log_{10}\mathrm{err}(\lambda_k, \hat{\lambda}_k)$},
axis background/.style={fill=white},
title style={font=\bfseries},
title={2-soliton},
axis x line*=bottom,
axis y line*=left,
legend columns=3,
legend style={at={(1, 1)}, anchor=north east, inner sep=0cm, legend cell align=right, align=right, draw=white!15!black, /tikz/every even column/.append style={column sep=0.8em}}
]
\addlegendimage{empty legend}
\addlegendentry{FB-CFQM:}
\addplot [color=red, semithick, densely dotted, mark=*, mark options={solid, red, scale=0.5, fill=red}]
  table[row sep=crcr]{%
47	0.020278527842532\\
65	0.00245942872461222\\
91	4.31321643322694e-05\\
129	1.35158323772488e-07\\
183	3.15801155270249e-08\\
257	3.97273068537974e-09\\
365	4.97058740877571e-10\\
515	6.10072166494474e-11\\
727	6.20832711754891e-12\\
1029	7.57962329211516e-13\\
1455	1.49397668461831e-12\\
};
\addlegendentry{$\lambda_1$}
\addplot [color=mycolor1, semithick, densely dotted, mark=square*, mark options={solid, mycolor1, scale=0.5, fill=mycolor1}]
table[row sep=crcr]{%
	47	0.0170641316484428\\
	65	0.00151372110450172\\
	91	2.09154808554646e-05\\
	129	8.83483598492988e-08\\
	183	7.67605988053469e-08\\
	257	1.03045949174809e-07\\
	365	9.53922991833607e-08\\
	515	1.00251490126044e-07\\
	727	1.0410748523995e-07\\
	1029	1.02196782333229e-07\\
	1455	1.02482839436927e-07\\
};
\addlegendentry{$\lambda_{2}$}
\addlegendimage{empty legend}
\addlegendentry{FC:}
\addplot [color=red, dashed, mark=o, mark options={solid, red}]
  table[row sep=crcr]{%
  	47	0.0315454470251619\\
  	65	0.00131960065289316\\
  	91	5.7915652998212e-05\\
  	129	8.7552735061891e-08\\
  	183	6.55786517682226e-09\\
  	257	7.62117369215917e-09\\
  	365	6.96421531554575e-09\\
  	515	7.31354458476242e-09\\
  	727	7.60025320465489e-09\\
  	1029	7.46581037764476e-09\\
  	1455	7.49056835262477e-09\\
};
\addlegendentry{$\lambda{}_\text{1}$}
\addplot [color=mycolor1, dashed, mark=square, mark options={solid, mycolor1}]
  table[row sep=crcr]{%
47	0.0301776082978034\\
65	0.00050989373216126\\
91	0.00025955873325656\\
129	0.000290666512694863\\
183	0.000281212991953643\\
257	0.00030290778980091\\
365	0.000289538569497131\\
515	0.000296722438599499\\
727	0.000302491772154097\\
1029	0.000299799206667413\\
1455	0.000300293310147396\\
};
\addlegendentry{$\lambda{}_\text{2}$}

\node at (rel axis cs:0,1) [anchor=north west] {(a)};
\end{axis}

\end{tikzpicture}%
%
%
\definecolor{mycolor1}{rgb}{0,0,0}%
\definecolor{mycolor2}{rgb}{0.00000,1.00000,0.40000}%
\definecolor{mycolor3}{rgb}{0.00000,0.40000,1.00000}%
\definecolor{mycolor4}{rgb}{0.80000,0.00000,1.00000}%
\begin{tikzpicture}
\begin{axis}[%
width=0.951\figurewidth,
height=\figureheight,
at={(0\figurewidth,0\figureheight)},
scale only axis,
xmode=log,
xmin=100,
xmax=2000,
xminorticks=true,
ymode=log,
ymin=5e-15,
ymax=2000,
ytick={1,1e-1,1e-2,1e-3,1e-4,1e-5,1e-6,1e-7,1e-8,1e-9,1e-10,1e-11,1e-12,1e-13,1e-14,1e-15},
yticklabels={0,,,-3,,,-6,,,-9,,,-12,,,-15},
yminorticks=true,
axis background/.style={fill=white},
title style={font=\bfseries},
title={5-soliton},
axis x line*=bottom,
axis y line*=left,
legend style={at={(1, 1)}, anchor=north east, inner sep=0cm, legend cell align=left, align=left, draw=white!15!black}
]
\addlegendimage{color=gray, densely dotted, semithick, mark=diamond*, mark options={scale=0.5, solid}}
\addlegendentry{(filled markers) FB-CFQM}
\addlegendimage{color=gray, dashed, mark=diamond, mark options={solid}}
\addlegendentry{(empty markers) FC}
\addplot[color=red, densely dotted, semithick, mark=*, mark options={solid, red, scale=0.5, fill=red}, forget plot]
  table[row sep=crcr]{%
115	3.59714301098506e-07\\
161	1.68504722159193e-08\\
229	2.13287897246556e-09\\
323	2.67810043404333e-10\\
455	3.35534747574399e-11\\
643	4.19828810376588e-12\\
909	5.26650008911562e-13\\
1285	6.53406918677819e-14\\
1817	2.12031223851959e-14\\
};
\addplot [color=red, dashed, mark=o, mark options={solid, red}, forget plot]
  table[row sep=crcr]{%
115	3.40903977222692e-07\\
161	1.73982689802995e-11\\
229	2.42855921744963e-14\\
323	6.52237106170216e-15\\
455	1.72203344534074e-14\\
643	3.10654224556567e-14\\
909	1.44935367672543e-14\\
1285	5.01830398204209e-14\\
1817	6.9398910692318e-14\\
};
\addplot [color=mycolor1, densely dotted, semithick, mark=square*, mark options={solid, mycolor1, scale=0.5, fill=mycolor1}, forget plot]
  table[row sep=crcr]{%
115	4.57758293957226e-08\\
161	1.09983210774604e-08\\
229	1.40305981966268e-09\\
323	1.76050245737263e-10\\
455	2.2053288474067e-11\\
643	2.75695387722694e-12\\
909	3.49315410652211e-13\\
1285	4.08545557327535e-14\\
1817	4.69373097595814e-14\\
};
\addplot [color=mycolor1, dashed, mark=square, mark options={solid, mycolor1}, forget plot]
  table[row sep=crcr]{%
115	2.43596047801459e-08\\
161	1.61201607901793e-10\\
229	2.32209163928348e-14\\
323	1.07337509342059e-14\\
455	3.09077728608804e-14\\
643	2.76040918220463e-14\\
909	1.13993787727786e-14\\
1285	4.39374348632517e-14\\
1817	4.58444776435741e-14\\
};
\addplot [color=mycolor2, densely dotted, semithick, mark=triangle*, mark options={rotate=90, solid, mycolor2, scale=0.5, fill=mycolor2}, forget plot]
  table[row sep=crcr]{%
115	4.73035147568665e-07\\
161	3.75924650243728e-08\\
229	4.81936141134341e-09\\
323	6.04639437187241e-10\\
455	7.57110608835062e-11\\
643	9.47419810876789e-12\\
909	1.17717935925208e-12\\
1285	1.48440091737042e-13\\
1817	2.83468117982797e-14\\
};
\addplot [color=mycolor2, dashed, mark=triangle, mark options={rotate=90, solid, mycolor2}, forget plot]
  table[row sep=crcr]{%
115	5.751087980222e-07\\
161	6.10242598205142e-10\\
229	1.3921586446543e-11\\
323	1.5086453769782e-11\\
455	1.70999121711231e-11\\
643	1.74884800530168e-11\\
909	1.77027584634324e-11\\
1285	1.79235672778094e-11\\
1817	1.80798644926486e-11\\
};
\addplot [color=mycolor3, densely dotted, semithick, mark=triangle*, mark options={solid, rotate=180,  scale=0.5, fill=mycolor3, mycolor3}, forget plot]
  table[row sep=crcr]{%
115	1.10526091897166e-06\\
161	7.1390220779368e-08\\
229	9.04809946150587e-09\\
323	1.10177740580911e-09\\
455	1.06796398430505e-10\\
643	4.68530799183131e-11\\
909	5.75935840965365e-11\\
1285	5.99510692899951e-11\\
1817	5.9622012496219e-11\\
};
\addplot [color=mycolor3, dashed, mark=triangle, mark options={solid, rotate=180, mycolor3}, forget plot]
  table[row sep=crcr]{%
115	1.89847724078048e-06\\
161	2.54610184574057e-07\\
229	2.24730556049078e-07\\
323	2.37762155210947e-07\\
455	2.5846838994348e-07\\
643	2.6248006909238e-07\\
909	2.64564523554007e-07\\
1285	2.66842669381169e-07\\
1817	2.67666383261692e-07\\
};
\addplot [color=mycolor4, densely dotted, semithick, mark=pentagon*, mark options={solid, mycolor4, scale=0.5, fill=mycolor4}, forget plot]
  table[row sep=crcr]{%
115	3.19487016786424e-06\\
161	7.43759266220672e-07\\
229	5.12461325832797e-07\\
323	5.19538422003752e-07\\
455	5.61905645899594e-07\\
643	5.70321453703397e-07\\
909	5.74823887945125e-07\\
1285	5.79773461283588e-07\\
1817	5.81563758081915e-07\\
};
\addplot [color=mycolor4, dashed, mark=pentagon, mark options={solid, mycolor4}, forget plot]
  table[row sep=crcr]{%
115	0.00131811382186988\\
161	0.0014669269703154\\
229	0.00138031701997924\\
323	0.0014195809457875\\
455	0.0014797985038519\\
643	0.00149118007662596\\
909	0.00149705942185681\\
1285	0.00150345808926237\\
1817	0.00150576496348938\\
};
\node at (rel axis cs:0, 1) [anchor=north west] {(b)};
\end{axis}
\end{tikzpicture}%
%
%
\definecolor{mycolor1}{rgb}{0,0,0}%
\begin{tikzpicture}
\begin{axis}[%
width=0.951\figurewidth,
height=\figureheight,
at={(0\figurewidth,0\figureheight)},
scale only axis,
xmode=log,
xmin=33,
xmax=1025,
xminorticks=true,
ymode=log,
ymin=5.84138943398947e-16,
ymax=5,
ytick={1,1e-1,1e-2,1e-3,1e-4,1e-5,1e-6,1e-7,1e-8,1e-9,1e-10,1e-11,1e-12,1e-13,1e-14,1e-15},
yticklabels={0,,,-3,,,-6,,,-9,,,-12,,,-15},
yminorticks=true,
axis background/.style={fill=white},
title style={font=\bfseries},
title={$\mathbf{2.2sech(\boldsymbol{t})}$},
axis x line*=bottom,
axis y line*=left,
legend columns=2,
transpose legend,
legend style={at={(1, 1)}, anchor=north east, inner sep=0cm, legend cell align=right, align=right, draw=white!15!black, /tikz/every even column/.append style={column sep=0.8em}}
]
\addlegendimage{empty legend}
\addlegendentry{FB-CFQM:}
\addlegendimage{empty legend}
\addlegendentry{FC:}
\addplot [color=red, densely dotted, semithick, mark=*, mark options={solid, scale=0.5, fill=red, red}]
  table[row sep=crcr]{%
33	0.000303517426895664\\
47	3.46450370297913e-06\\
65	1.12070861983513e-07\\
91	1.48470036361685e-08\\
129	1.86811760627403e-09\\
181	2.34284584678999e-10\\
257	2.93347134712885e-11\\
363	3.66782488081865e-12\\
513	4.60618950467794e-13\\
725	6.26025783268261e-14\\
1025	7.81458401547499e-15\\
};
\addlegendentry{$\lambda{}_{1}$}
\addplot [color=red, dashed, mark=o, mark options={solid, red}]
  table[row sep=crcr]{%
33	0.000107310213768961\\
47	1.38698136453077e-06\\
65	1.81759744510196e-09\\
91	9.6528435070606e-14\\
129	1.45063605389197e-15\\
181	1.01247202753286e-14\\
257	6.3627174560715e-15\\
363	1.30649886639436e-14\\
513	1.32479154414516e-14\\
725	3.8488384339513e-14\\
1025	3.74950162488004e-14\\
};
\addlegendentry{$\lambda{}_{1}$}
\addplot [color=mycolor1, densely dotted, semithick, mark=square*, mark options={solid, scale=0.5, fill=mycolor1, mycolor1}]
  table[row sep=crcr]{%
33	0.000490544042639911\\
47	8.93166763276991e-06\\
65	1.89225784259399e-07\\
91	2.60661167758447e-08\\
129	3.27730232210009e-09\\
181	4.06059580544906e-10\\
257	4.65980511113594e-11\\
363	2.35279123202374e-12\\
513	4.90353963076232e-12\\
725	5.69761885157503e-12\\
1025	5.79885518444742e-12\\
};
\addlegendentry{$\lambda{}_{2}$}
\addplot [color=mycolor1, dashed, mark=square, mark options={solid, mycolor1}]
  table[row sep=crcr]{%
33	0.00359324344340955\\
47	5.33824261238015e-05\\
65	4.80460777512499e-07\\
91	4.76247642433591e-07\\
129	4.57484098223152e-07\\
181	5.22496624696548e-07\\
257	4.88473046138844e-07\\
363	4.98833989194734e-07\\
513	5.04745997696534e-07\\
725	5.10528242379595e-07\\
1025	5.13084640964477e-07\\
};
\addlegendentry{$\lambda{}_{2}$}
\node at (rel axis cs:0,1) [anchor=north west] {(c)};
\end{axis}
\end{tikzpicture}%
%
%
\definecolor{mycolor1}{rgb}{0,0,0}%
\begin{tikzpicture}
\begin{axis}[%
width=0.951\figurewidth,
height=\figureheight,
at={(0\figurewidth,0\figureheight)},
scale only axis,
xmode=log,
xmin=47,
xmax=1455,
xminorticks=true,
xlabel style={font=\color{white!15!black}},
xlabel={Number of time samples, $M$},
ymode=log,
ymin=5e-8,
ymax=1,
ytick={1,1e-1,1e-2,1e-3,1e-4,1e-5,1e-6,1e-7},
yticklabels={0,-1,-2,-3,-4,-5,-6,-7},
yminorticks=true,
ylabel style={font=\color{white!15!black}},
ylabel={$\log_{10}\mathrm{err}(b_k, \hat{b}_k)$},
axis background/.style={fill=white},
axis x line*=bottom,
axis y line*=left,
legend columns=3,
legend style={at={(1,1)}, anchor=north east, inner sep=0cm, legend cell align=right, align=right, draw=white!15!black, /tikz/every even column/.append style={column sep=0.8em}}
]
\addlegendimage{empty legend}
\addlegendentry{FB-CFQM:}
\addplot [color=red, densely dotted, semithick, mark=*, mark options={solid, red, scale=0.5, fill=red}]
  table[row sep=crcr]{%
47	0.00944894024987011\\
65	0.000795069107140034\\
91	2.75548227083842e-05\\
129	1.09442050192029e-07\\
183	9.70547579849931e-08\\
257	1.03509812681567e-07\\
365	9.34785035613765e-08\\
515	9.78804892633773e-08\\
727	1.01605791122444e-07\\
1029	9.97437369392056e-08\\
1455	9.99132834800243e-08\\
};
\addlegendentry{${b}_{1}$}
\addplot [color=mycolor1, densely dotted, semithick, mark=square*, mark options={solid, mycolor1, scale=0.5, fill=mycolor1}]
table[row sep=crcr]{%
	47	0.00404423318141334\\
	65	0.000437184352039702\\
	91	1.54218655967513e-05\\
	129	8.90910044162412e-07\\
	183	8.31324620823271e-07\\
	257	9.48574023247926e-07\\
	365	8.68583952837636e-07\\
	515	9.08196195095479e-07\\
	727	9.40722693909752e-07\\
	1029	9.24179440309206e-07\\
	1455	9.2649473043549e-07\\
};
\addlegendentry{${b}_{2}$}
\addlegendimage{empty legend}
\addlegendentry{FC:}
\addplot [color=red, dashed, mark=o, mark options={solid, red}]
  table[row sep=crcr]{%
  	47	0.071076624021813\\
  	65	0.00570920973596306\\
  	91	3.08042874958005e-05\\
  	129	6.23641514968054e-07\\
  	183	3.13437093028326e-07\\
  	257	3.30124873642525e-07\\
  	365	2.96082857928736e-07\\
  	515	3.04473166695295e-07\\
  	727	3.1185602296805e-07\\
  	1029	3.04450040034365e-07\\
  	1455	3.0379864979056e-07\\
};
\addlegendentry{${b}_{1}$}
\addplot [color=mycolor1, dashed, mark=square, mark options={solid, mycolor1}]
  table[row sep=crcr]{%
47	0.000477170989092612\\
65	7.30195089694852e-05\\
91	1.74312241689178e-05\\
129	1.15348851836838e-06\\
183	1.09669387238221e-06\\
257	1.23927456946454e-06\\
365	1.12965145337993e-06\\
515	1.17558922130279e-06\\
727	1.21367466859985e-06\\
1029	1.19067263182714e-06\\
1455	1.19218143640467e-06\\
};
\addlegendentry{${b}_{2}$}
\node at (rel axis cs:0,1) [anchor=north west] {(d)};
\end{axis}
\end{tikzpicture}%
%
%
\definecolor{mycolor1}{rgb}{0,0,0}%
\definecolor{mycolor2}{rgb}{0.00000,1.00000,0.40000}%
\definecolor{mycolor3}{rgb}{0.00000,0.40000,1.00000}%
\definecolor{mycolor4}{rgb}{0.80000,0.00000,1.00000}%
\begin{tikzpicture}
\begin{axis}[%
width=0.951\figurewidth,
height=\figureheight,
at={(0\figurewidth,0\figureheight)},
scale only axis,
xmode=log,
xmin=100,
xmax=2000,
xminorticks=true,
xlabel style={font=\color{white!15!black}},
xlabel={Number of time samples, $M$},
ymode=log,
ymin=1e-08,
ymax=1,
ytick={1,1e-1,1e-2,1e-3,1e-4,1e-5,1e-6,1e-7,1e-8},
yticklabels={0,-1,-2,-3,-4,-5,-6,-7,-8},
yminorticks=true,
ylabel style={font=\color{white!15!black}},
axis background/.style={fill=white},
axis x line*=bottom,
axis y line*=left,
legend style={inner sep=0cm, legend cell align=left, align=left, draw=white!15!black, /tikz/every even column/.append style={column sep=1em}}
]
\addlegendimage{color=gray, densely dotted, semithick, mark=diamond*, mark options={scale=0.5, solid}}
\addlegendentry{(filled markers) FB-CFQM}
\addlegendimage{color=gray, dashed, mark=diamond, mark options={solid}}
\addlegendentry{(empty markers) FC}
\addplot[color=red, densely dotted, semithick, mark=*, mark options={solid, red, scale=0.5, fill=red}, forget plot]
  table[row sep=crcr]{%
115	5.35326619969357e-08\\
161	1.19208840525455e-07\\
229	3.0682683372705e-08\\
323	2.3011439277317e-08\\
455	2.39707389710291e-08\\
643	2.42002815377061e-08\\
909	2.4355957076854e-08\\
1285	2.45471129016276e-08\\
1817	2.46105099364817e-08\\
};
\addplot [color=red, dashed, mark=o, mark options={solid, red}, forget plot]
  table[row sep=crcr]{%
115	0.0019326235980697\\
161	0.00174133958881484\\
229	0.00169250350773316\\
323	0.00166864067155031\\
455	0.00165678449781195\\
643	0.00165087951103527\\
909	0.00164793423537532\\
1285	0.00164646392301568\\
1817	0.00164572956414368\\
};
\addplot [color=mycolor1, densely dotted, semithick, mark=square*, mark options={solid, mycolor1, scale=0.5, fill=mycolor1}, forget plot]
  table[row sep=crcr]{%
115	1.72442228040313e-06\\
161	3.55851277447343e-08\\
229	2.51646874100586e-08\\
323	2.89244254333119e-08\\
455	3.17474635892505e-08\\
643	3.22411387622951e-08\\
909	3.24716074203443e-08\\
1285	3.27291447589925e-08\\
1817	3.28136968783263e-08\\
};
\addplot [color=mycolor1, dashed, mark=square, mark options={solid, mycolor1}, forget plot]
  table[row sep=crcr]{%
115	0.000223980796764198\\
161	0.000197510246477131\\
229	0.00019234810435937\\
323	0.000189883958592002\\
455	0.0001886611965986\\
643	0.000188052540345358\\
909	0.000187749016118661\\
1285	0.000187597475826423\\
1817	0.000187521880157923\\
};
\addplot [color=mycolor2, densely dotted, semithick, mark=triangle*, mark options={rotate=90, solid, mycolor2, scale=0.5, fill=mycolor2}, forget plot]
  table[row sep=crcr]{%
115	2.83833494042352e-06\\
161	2.69436679792851e-07\\
229	4.43568771539082e-08\\
323	4.25493850465994e-08\\
455	4.74957977071167e-08\\
643	4.83461484789418e-08\\
909	4.8705122950413e-08\\
1285	4.90926604170969e-08\\
1817	4.92195393647964e-08\\
};
\addplot [color=mycolor2, dashed, mark=triangle, mark options={rotate=90, solid, mycolor2}, forget plot]
  table[row sep=crcr]{%
115	2.25475314712247e-05\\
161	5.16069209115443e-06\\
229	4.92290433211192e-06\\
323	5.00052141274566e-06\\
455	5.19226855868797e-06\\
643	5.22859221373117e-06\\
909	5.24810259314633e-06\\
1285	5.27242039183491e-06\\
1817	5.28077140088835e-06\\
};
\addplot [color=mycolor3, densely dotted, semithick, mark=triangle*, mark options={solid, rotate=180,  scale=0.5, fill=mycolor3, mycolor3}, forget plot]
  table[row sep=crcr]{%
115	1.79778491110697e-06\\
161	2.78620110451416e-07\\
229	9.61665007535175e-08\\
323	8.90188710107038e-08\\
455	9.5497001215586e-08\\
643	9.67284625789311e-08\\
909	9.73854454483944e-08\\
1285	9.81518755037692e-08\\
1817	9.84040222386493e-08\\
};
\addplot [color=mycolor3, dashed, mark=triangle, mark options={solid, rotate=180, mycolor3}, forget plot]
  table[row sep=crcr]{%
115	0.000192586405697579\\
161	0.000223370754149449\\
229	0.000212979295370764\\
323	0.000222643939968017\\
455	0.00023526490602523\\
643	0.000238793368437134\\
909	0.000240911497022358\\
1285	0.000242811137119065\\
1817	0.000243760322292562\\
};
\addplot [color=mycolor4, densely dotted, semithick, mark=pentagon*, mark options={solid, mycolor4, scale=0.5, fill=mycolor4}, forget plot]
  table[row sep=crcr]{%
115	9.46558729362405e-07\\
161	7.14632729490502e-07\\
229	6.36353367506323e-07\\
323	6.69212097700196e-07\\
455	7.22197005074417e-07\\
643	7.31699471096378e-07\\
909	7.36427348681543e-07\\
1285	7.41843113838409e-07\\
1817	7.43621285582528e-07\\
};
\addplot [color=mycolor4, dashed, mark=pentagon, mark options={solid, mycolor4}, forget plot]
  table[row sep=crcr]{%
115	2.33184256850174e-06\\
161	1.37638692761197e-06\\
229	1.20345910017011e-06\\
323	1.25185768381643e-06\\
455	1.34093564171689e-06\\
643	1.35241456341434e-06\\
909	1.35701842130217e-06\\
1285	1.3641361886223e-06\\
1817	1.36546177183436e-06\\
};
\node at (rel axis cs:0, 1) [anchor=north west] {(e)};
\end{axis}
\end{tikzpicture}%
%
%
\definecolor{mycolor1}{rgb}{0,0,0}%
\begin{tikzpicture}
\begin{axis}[%
width=0.951\figurewidth,
height=\figureheight,
at={(0\figurewidth,0\figureheight)},
scale only axis,
xmode=log,
xmin=33,
xmax=1025,
xminorticks=true,
xlabel style={font=\color{white!15!black}},
xlabel={Number of time samples, $M$},
ymode=log,
ymin=1e-16,
ymax=0.01,
ytick={1,1e-1,1e-2,1e-3,1e-4,1e-5,1e-6,1e-7,1e-8,1e-9,1e-10,1e-11,1e-12,1e-13,1e-14,1e-15,1e-16},
yticklabels={0,,-2,,-4,,-6,,-8,,-10,,-12,,-14,,-16},
yminorticks=true,
axis background/.style={fill=white},
axis x line*=bottom,
axis y line*=left,
legend columns=3,
legend style={at={(1, 1)}, anchor=north east, inner sep=0cm,legend cell align=right, align=right, draw=white!15!black, /tikz/every even column/.append style={column sep=0.8em}}
]
\addlegendimage{empty legend}
\addlegendentry{FB-CFQM:}
\addplot [color=red, densely dotted, semithick, mark=*, mark options={solid, scale=0.5, fill=red, red}]
  table[row sep=crcr]{%
33	1.33351203773871e-15\\
47	8.88186412139027e-16\\
65	5.55431636553075e-16\\
91	6.80829046444916e-16\\
129	2.22052905207049e-16\\
181	6.67374741623461e-16\\
257	2.31743848828119e-16\\
363	1.11023863241579e-15\\
513	2.23517623151686e-16\\
725	1.88758935606809e-15\\
1025	7.7732800378447e-16\\
};
\addlegendentry{$b_1$}
\addplot [color=mycolor1, densely dotted, semithick, mark=square*, mark options={solid, scale=0.5, fill=mycolor1, mycolor1}]
  table[row sep=crcr]{%
33	1.51620434199308e-11\\
47	1.9243795828751e-11\\
65	4.05281553347096e-11\\
91	9.33280125221742e-12\\
129	2.32266547697754e-11\\
181	2.7929016350626e-11\\
257	1.18676130101818e-10\\
363	1.65007894152778e-10\\
513	3.29945145650001e-10\\
725	1.1819014360427e-10\\
1025	3.9336614402315e-10\\
};
\addlegendentry{$b_2$}
\addlegendimage{empty legend}
\addlegendentry{FC:}
\addplot [color=red, dashed, mark=o, mark options={solid, red}]
table[row sep=crcr]{%
33	2.109284473865e-15\\
47	7.34517394901762e-15\\
65	1.87880416528438e-15\\
91	5.33144810062401e-15\\
129	5.44010942867064e-15\\
181	1.11028752839981e-15\\
257	1.33367209752635e-15\\
363	3.10865868513853e-15\\
513	2.6659242590061e-15\\
725	2.55404545618455e-15\\
1025	1.3322737049889e-15\\
};
\addlegendentry{$b_1$}
\addplot [color=mycolor1, dashed, mark=square, mark options={solid, mycolor1}]
  table[row sep=crcr]{%
33	0.00128779297317172\\
47	1.83409281177632e-13\\
65	7.35914544071817e-13\\
91	5.62256869110772e-13\\
129	7.52129598986895e-13\\
181	2.77308485608164e-12\\
257	2.80007796721955e-12\\
363	1.01274253962913e-12\\
513	9.20618461836564e-12\\
725	9.98968235456082e-12\\
1025	1.77765071895043e-11\\
};
\addlegendentry{$b_2$}
\node at (rel axis cs:0, 1) [anchor=north west] {(f)};
\end{axis}
\end{tikzpicture}%
	\caption{Error in the computation of the discrete spectrum of different pulses. First column: error in $\lambda_k$ (a) and $b_k$ (d) for a $2$-soliton. Second column: errors for a $5$-soliton, with legend: \protect\tikz\protect\draw [only marks, mark=*, color=red, mark options={scale=0.5}] plot (0, 0); \protect\tikz\protect\draw [only marks, mark=o, color=red] plot (0, 0); ($\lambda_1$, $b_1$), \protect\tikz\protect\draw [only marks, mark=square*, color=mycolor1, mark options={scale=0.5}] plot (0, 0); \protect\tikz\protect\draw [only marks, mark=square, color=mycolor1] plot (0, 0); ($\lambda_2$, $b_2$), \protect\tikz\protect\draw [only marks, mark=triangle*, color=mycolor2, mark options={rotate=90, solid, scale=0.5, mycolor2}] plot (0, 0); \protect\tikz\protect\draw [only marks, mark=triangle, color=mycolor2, mark options={rotate=90, solid, mycolor2}] plot (0, 0); ($\lambda_3$, $b_3$), \protect\tikz\protect\draw [only marks, mark=triangle*, color=mycolor3, mark options={rotate=180, scale=0.5, solid, mycolor3}] plot (0, 0); \protect\tikz\protect\draw [only marks, mark=triangle, color=mycolor3, mark options={rotate=180, solid, mycolor3}] plot (0, 0); ($\lambda_4$, $b_4$), \protect\tikz\protect\draw [only marks, mark=pentagon*, color=mycolor4, mark options={scale=0.5}] plot (0, 0); \protect\tikz\protect\draw [only marks, mark=pentagon, color=mycolor4] plot (0, 0); ($\lambda_5$, $b_5$). Third column: errors for $q(t)=2.2\mathrm{sech}(t)$.}
	\label{fig:pert}
\end{figure*}

\subsection{Accuracy of FC in the noiseless case}

We measured the accuracy of the computation of $\{\lambda_k,b_k\}$
using~\eqref{eq:fc} and~\eqref{eq:ab_fromev_win} for three different pulses:
\begin{itemize}
\item a 2-soliton with $\lambda_k=\left[0.6j, 0.3j\right]$ and $b_k=\left[\frac{1}{3}j, \frac{1}{3}j\right]$
\item a 5-soliton with $\lambda_k=\left[1.5j, 1.2j, 0.9j, 0.6j, 0.3j\right]$ and, respectively, $b_k=[0.8855 + 0.1109j,  -1.4293 - 0.6778j,   1.0701 + 0.2486j, -0.0965 + 1.0385j,   0.3345 + 0.8551j]$
\item the pulse 
\begin{equation}
q(t)=2.2\mathrm{sech}(t)
\end{equation}
which has known nonzero continuous and discrete spectra~\cite{satsuma2018sech}. The discrete spectrum is $\lambda_k=[1.7, 0.7]$ and $b_k=[-1, 1]$.
\end{itemize}
We used the Darboux algorithm {with double precision floating-point accuracy} to generate the solitons, and then computed the discrete spectrum of the pulses using FB-CFQM and FC. For our FC method, we used {tail truncation and} Hann windows $w_1[n]$ and $w_2[n]$ (see~\eqref{eq:ab_fromev_win}). 
We used a time interval with fixed length $T$, and varied the number of time samples by varying the sampling period. We used $T=35.3$ for the 2-soliton and the 5-soliton, and $T=24.0$ for the sech pulse. We compared the results of FB-CFQM and FC in terms of error:
\begin{equation}
\mathrm{err}(\lambda_k, \hat{\lambda}_k)\triangleq\left(|\hat{\lambda}_k-\lambda_k|\right)/\left|\lambda_k\right|
\end{equation}
where $\hat{\lambda}_k$ is the eigenvalue obtained using FB-CFQM or FC, and $\lambda_k$ is the original eigenvalue. $\mathrm{err}(b_k, \hat{b}_k)$ is defined similarly.
The results in Fig.~\ref{fig:pert} show that
Our FC method
provides rather precise
estimates of $b_k$
(especially for $2$-soliton and sech pulse, resp. Fig.~\ref{fig:pert} (d) and (f)). However,
FB-CFQM has overall more precise estimates, especially when the number of samples is large.


\subsection{Accuracy of the FC perturbation analysis in the noisy case}
To validate our proposed closed-form expressions for the covariance matrix of the eigenvalues~\eqref{eq:C_ll} and spectral coefficients~\eqref{eq:C_bb}, we simulated the same three pulses of the previous section, using $M=365$ samples for the 2-soliton, $M=909$ for the 5-soliton, and $M=329$ for the sech pulse. {Using higher values of $M$ did not have any significant effect on the resulting covariances}. We then added AWGN $\sigma\tilde{q}[m]$ to the three pulses. We define the signal-to-noise ratio (SNR) as:
{\begin{equation}
\mathrm{SNR}=\frac{1}{M\sigma^2}\sum_{m=-N}^{N}\left|q[m]\right|^2.
\end{equation}}
For each value of SNR, we generated $1024$ realizations of noise and computed the discrete spectrum of the noisy pulse using the FC and FB-CFQM. From this data, we estimated the covariance matrix of $\lambda_k$ and of $b_k$ for the FC and FB-CFQM case, and compared it with our analytic covariance matrices given by~\eqref{eq:C_ll} and~\eqref{eq:C_bb}. {For clarity of presentation,}
we only plot the following entries
\begin{align}
&\sigma_{\lambda_k}^2 \triangleq\sigma^2\mathrm{E}\left[(\Re\tilde{\lambda}_k)^2\right]+\sigma^2\mathrm{E}\left[(\Im\tilde{\lambda}_k)^2\right] \\
&\left|\Re{\sigma_{\lambda_1\lambda_2}}\right|\triangleq\sigma^2\left|\mathrm{E}\left[\Re{\tilde{\lambda}_1}\Re{\tilde{\lambda}_2}+\Im{\tilde{\lambda}_1}\Im{\tilde{\lambda}_2}\right]\right|
\end{align}
{Similar quantities are defined and plotted for $b_k$}. {We also compare the variances $\sigma_{\lambda_k}^2$ of our method with
the ones obtained from the time-domain method in ~\cite{mansoor_parts}. Both methods obtain fairly the same $\sigma_{\lambda_k}^2$
as shown in Fig.~\ref{fig:noise}.  } 
For the 2-soliton, we set the SNR to $10$ dB and used $b_k=[\frac{1}{3}j,\frac{1}{3}j\exp(j\alpha)]$. In the first column of Fig.~\ref{fig:noise}, we plot $\sigma_{\lambda_k}^2$ and $\left|\Re{\sigma_{\lambda_1\lambda_2}}\right|$ as a function of the phase difference $\alpha$ between the spectral coefficients. As we already reported in~\cite{garcia2018statistics}, we see that the eigenvalues of 2-solitons where $b_1$ and $b_2$ have equal phase (or equivalently, $Q_d(\lambda_1)$ and $Q_d(\lambda_2)$ have \textit{opposite} phase) are less robust to noise.

In the second and third columns of Fig.~\ref{fig:noise}, we vary the SNR and plot $\sigma_{\lambda_k}^2$ for the 5-soliton, and both $\sigma_{\lambda_k}^2$ and $\left|\Re{\sigma_{\lambda_1\lambda_2}}\right|$ for the sech pulse. Our analytic expressions for the covariances (solid lines) very accurately predict the numerical covariances obtained using FC and FB-CFQM. {For very low SNR, the computation of the Jost solutions using FC becomes numerically unstable, and the ringing in Fig.~\ref{fig:ringing} increases considerably. This makes the estimation of $b_1$ with FC challenging, especially for the sech pulse. The analytic covariance matrix is computed from the noiseless pulse and is thus not affected by this}. Our analytic covariances are in all cases very close to the lowest numerical ones (either FB-CFQM or FC).

{Fig.~\ref{fig:C_big} shows the full covariance matrix obtained with FC
\begin{equation}
\sigma^2\mathbf{C}\triangleq\sigma^2\left(\begin{matrix}\mathbf{C}_{\tilde{\mathbf{\lambda}}} & \mathbf{C}_{\tilde{\mathbf{\lambda}}\tilde{\mathbf{b}}} \\ \mathbf{C}_{\tilde{\mathbf{\lambda}}\tilde{\mathbf{b}}}^{\mathrm{T}} & \mathbf{C}_{\tilde{\mathbf{b}}} \end{matrix}\right)
\end{equation}
for the $2$-soliton with SNR=$10$ dB and $\alpha=\pi$ (i.e., $b_k=[\frac{1}{3}j,-\frac{1}{3}j]$. The covariance matrices obtained with FB-CFQM and with our analytic formulas \eqref{eq:C_ll}, \eqref{eq:C_bb}, \eqref{eq:C_lb} were very close to the FC result. Namely, the errors between them were
\begin{subequations}
\begin{align}
    & \|\mathbf{C}^{(\text{FC})}-\mathbf{C}^{(\text{analytic})}\|_{\mathrm{F}}^2/\|\mathbf{C}^{(\text{FC})}\|_{\mathrm{F}}^2  \approx 0.0035 \\
    & \|\mathbf{C}^{(\text{FC})}-\mathbf{C}^{(\text{FB-CFQM})}\|_{\mathrm{F}}^2/\|\mathbf{C}^{(\text{FC})}\|_{\mathrm{F}}^2 \approx 0.0039 \\
    & \|\mathbf{C}^{(\text{FB-CFQM})}-\mathbf{C}^{(\text{analytic})}\|_{\mathrm{F}}^2/\|\mathbf{C}^{(\text{analytic})}\|_{\mathrm{F}}^2 \approx 0.0032.
\end{align}
\end{subequations}
We observe in Fig.~\ref{fig:C_big} that all real parts $(\Re \lambda_k, \Re b_k)$ are correlated with each other, and so are the imaginary parts $(\Im \lambda_k, \Im b_k)$, but there is little correlation between any real part and any imaginary part. Similarly to $1$-solitons, the covariances of the $\Re\lambda_k$ are smaller than those of the $\Im\lambda_k$.

}

\begin{figure*}[!t]\centering
	\setlength{\figurewidth}{0.29\textwidth}
	\setlength{\figureheight}{0.5\figurewidth}
%
%
\definecolor{mycolor1}{rgb}{0,0,0}%
\begin{tikzpicture}
\begin{axis}[%
width=0.951\figurewidth,
height=\figureheight,
at={(0\figurewidth,0\figureheight)},
scale only axis,
xmin=0,
xmax=1,
ymode=log,
ymin=1e-05,
ymax=0.01,
yminorticks=true,
ylabel style={font=\color{white!15!black}},
ylabel={$\log_{10}|\textrm{variance}|$},
ytick={1e-2,1e-3,1e-4,1e-5},
yticklabels={-2,-3,-4,-5},
axis background/.style={fill=white},
title style={font=\bfseries},
title={2-soliton},
axis x line*=bottom,
axis y line*=left,
legend style={at={(0.5, 1)}, anchor=north, inner sep=0, legend cell align=left, align=left, draw=white!15!black}
]
\addlegendimage{empty legend}
\addlegendentry{\tikz\draw[color=red, scale=0.5, fill=red] plot[mark=*] (0,0); \tikz\draw[color=red, fill=red] plot[mark=o] (0,0); $\sigma_{\lambda_1}^2$ \quad \tikz\draw[color=black, scale=0.5, fill=black] plot[mark=square*] (0,0); \tikz\draw[color=black, fill=black] plot[mark=square] (0,0); $\sigma_{\lambda_2}^2$}
\addlegendimage{empty legend}
\addlegendentry{\tikz\draw[color=black!50!green, scale=0.5, fill=black!50!green] plot[mark=triangle*] (0,0); \tikz\draw[color=black!50!green, fill=black!50!green] plot[mark=triangle] (0,0);$|\Real{\sigma_{\lambda_1\lambda_2}}|$}
\addplot [color=red, only marks, semithick, mark=*, mark options={solid, red, fill=red, scale=0.5}, forget plot]
  table[row sep=crcr]{%
  	0	0.00232813688408916\\
  	0.0625	0.00219538005388133\\
  	0.125	0.000781603357781872\\
  	0.1875	0.000430410692497072\\
  	0.25	0.000251796715926442\\
  	0.3125	0.000176511177224401\\
  	0.375	0.000148472621012486\\
  	0.4375	0.000124403374618363\\
  	0.5	0.000125943254787086\\
  	0.5625	0.000123724565758504\\
  	0.625	0.000149238036367302\\
  	0.6875	0.000183797884763466\\
  	0.75	0.000259473488327077\\
  	0.8125	0.000445988112782129\\
  	0.875	0.000768773994547669\\
  	0.9375	0.00156746808945373\\
};
\addplot [color=red, only marks, mark=o, mark options={solid, red}, forget plot]
  table[row sep=crcr]{%
  	0	0.00200063044105978\\
  	0.0625	0.00148430461752533\\
  	0.125	0.000795736763577642\\
  	0.1875	0.000431095211479906\\
  	0.25	0.000263349908313051\\
  	0.3125	0.000182940113759554\\
  	0.375	0.000142722807549001\\
  	0.4375	0.000123572075044135\\
  	0.5	0.000117872287133043\\
  	0.5625	0.000123572075044136\\
  	0.625	0.000142722807549003\\
  	0.6875	0.000182940113759558\\
  	0.75	0.000263349908313059\\
  	0.8125	0.000431095211479922\\
  	0.875	0.000795736763577672\\
  	0.9375	0.00148430461752537\\
};
\addplot [color=red, solid, forget plot]
  table[row sep=crcr]{%
0	0.00200063044105978\\
0.0625	0.00148430461752533\\
0.125	0.000795736763577642\\
0.1875	0.000431095211479906\\
0.25	0.000263349908313051\\
0.3125	0.000182940113759554\\
0.375	0.000142722807549001\\
0.4375	0.000123572075044135\\
0.5	0.000117872287133043\\
0.5625	0.000123572075044136\\
0.625	0.000142722807549003\\
0.6875	0.000182940113759558\\
0.75	0.000263349908313059\\
0.8125	0.000431095211479922\\
0.875	0.000795736763577672\\
0.9375	0.00148430461752537\\
};
\addplot [color=red, densely dotted, thick, forget plot]
  table[row sep=crcr]{%
0	0.00200063044105973\\
0.0625	0.00148430461752531\\
0.125	0.00079573676357763\\
0.1875	0.000431095211479892\\
0.25	0.000263349908313037\\
0.3125	0.000182940113759539\\
0.375	0.000142722807548986\\
0.4375	0.00012357207504412\\
0.5	0.000117872287133028\\
0.5625	0.000123572075044119\\
0.625	0.000142722807548985\\
0.6875	0.000182940113759539\\
0.75	0.000263349908313037\\
0.8125	0.000431095211479892\\
0.875	0.00079573676357763\\
0.9375	0.00148430461752531\\
};
\addplot [color=mycolor1, only marks, semithick, mark=square*, mark options={solid, mycolor1, scale=0.5, fill=mycolor1}, forget plot]
  table[row sep=crcr]{%
  	0	0.000890301557938004\\
  	0.0625	0.00067572022935127\\
  	0.125	0.000440005252476749\\
  	0.1875	0.000333381053369635\\
  	0.25	0.00028335301498718\\
  	0.3125	0.000248708322535408\\
  	0.375	0.000244172676393902\\
  	0.4375	0.000230176911407246\\
  	0.5	0.000227894341792359\\
  	0.5625	0.000228105800610826\\
  	0.625	0.000234490608174257\\
  	0.6875	0.000268515210151512\\
  	0.75	0.00028899782554203\\
  	0.8125	0.000343568211229577\\
  	0.875	0.000443152055736591\\
  	0.9375	0.000679582457565236\\
};
\addplot [color=mycolor1, only marks, mark=square, mark options={solid, mycolor1}, forget plot]
  table[row sep=crcr]{%
  	0	0.000806040630801486\\
  	0.0625	0.0006786897236676\\
  	0.125	0.000442044608636552\\
  	0.1875	0.000335010876458486\\
  	0.25	0.000284811013004317\\
  	0.3125	0.000250017596539798\\
  	0.375	0.000245480375422989\\
  	0.4375	0.000231401575480581\\
  	0.5	0.000229139499270074\\
  	0.5625	0.00022935990943432\\
  	0.625	0.000235740380333672\\
  	0.6875	0.000269937900433789\\
  	0.75	0.00029051204895954\\
  	0.8125	0.000345238301911486\\
  	0.875	0.000445264950694586\\
  	0.9375	0.00068253424019242\\
};
\addplot [color=mycolor1, forget plot]
  table[row sep=crcr]{%
0	0.000774813110433148\\
0.0625	0.000624106341775922\\
0.125	0.000436494017231507\\
0.1875	0.000338086241302552\\
0.25	0.000286322822645095\\
0.3125	0.00025581882423653\\
0.375	0.000237341811017754\\
0.4375	0.000227256939515027\\
0.5	0.000224039490530124\\
0.5625	0.000227256939515025\\
0.625	0.000237341811017751\\
0.6875	0.000255818824236524\\
0.75	0.000286322822645086\\
0.8125	0.00033808624130254\\
0.875	0.000436494017231491\\
0.9375	0.000624106341775905\\
};
\addplot [color=mycolor1, densely dotted, thick, forget plot]
  table[row sep=crcr]{%
0	0.000774813109303277\\
0.0625	0.000624106340789158\\
0.125	0.000436494016424004\\
0.1875	0.000338086240564867\\
0.25	0.000286322821916639\\
0.3125	0.000255818823497758\\
0.375	0.000237341810266079\\
0.4375	0.000227256938754145\\
0.5	0.000224039489765986\\
0.5625	0.000227256938754145\\
0.625	0.000237341810266079\\
0.6875	0.000255818823497758\\
0.75	0.000286322821916639\\
0.8125	0.000338086240564868\\
0.875	0.000436494016424004\\
0.9375	0.000624106340789158\\
};
\addplot [color=black!50!green, only marks, semithick, mark=triangle*, mark options={solid, black!50!green, scale=0.5, fill=black!50!green}, forget plot]
  table[row sep=crcr]{%
  	0	0.00107820811112655\\
  	0.0625	0.000809373957065175\\
  	0.125	0.000326524821356953\\
  	0.1875	0.000118226405977023\\
  	0.25	2.33218168824316e-05\\
  	0.3125	3.76584807724159e-05\\
  	0.375	7.84732198624161e-05\\
  	0.4375	9.09362339359871e-05\\
  	0.5	9.95823529553728e-05\\
  	0.5625	8.77800108686763e-05\\
  	0.625	7.76985181724266e-05\\
  	0.6875	4.4846835424497e-05\\
  	0.75	1.33074254686548e-05\\
  	0.8125	0.000129405128990606\\
  	0.875	0.000331885024826821\\
  	0.9375	0.000810389383519128\\
};
\addplot [color=black!50!green, only marks, mark=triangle, mark options={solid, black!50!green}, forget plot]
  table[row sep=crcr]{%
  	0	0.00108700301144324\\
  	0.0625	0.000788058605530158\\
  	0.125	0.000327171622509215\\
  	0.1875	0.00011844624620568\\
  	0.25	2.3330057326672e-05\\
  	0.3125	3.77897407684273e-05\\
  	0.375	7.87103568600933e-05\\
  	0.4375	9.11896865820161e-05\\
  	0.5	9.98696789836758e-05\\
  	0.5625	8.80343721689459e-05\\
  	0.625	7.79302364441283e-05\\
  	0.6875	4.50070471168292e-05\\
  	0.75	1.32925091734866e-05\\
  	0.8125	0.000129643455083546\\
  	0.875	0.000332555511890418\\
  	0.9375	0.000812042246995242\\
};
\addplot [color=black!50!green, forget plot]
  table[row sep=crcr]{%
0	0.00106653408592105\\
0.0625	0.000749214696321237\\
0.125	0.000335165880190363\\
0.1875	0.000116253056815642\\
0.25	1.08102375188742e-05\\
0.3125	4.37984725603799e-05\\
0.375	7.33906733748082e-05\\
0.4375	8.83878954135466e-05\\
0.5	9.30031140123058e-05\\
0.5625	8.8387895413547e-05\\
0.625	7.33906733748093e-05\\
0.6875	4.37984725603819e-05\\
0.75	1.08102375188706e-05\\
0.8125	0.000116253056815636\\
0.875	0.000335165880190352\\
0.9375	0.000749214696321224\\
};
\node at (rel axis cs:0, 1) [anchor=north west] {(a)};
\end{axis}
\draw [dashed, thick, black] (\figurewidth, -1.5em) -- (\figurewidth, \figureheight+1.8em);
\end{tikzpicture}%
%
%
\definecolor{mycolor1}{rgb}{0,0,0}%
\definecolor{mycolor2}{rgb}{0.00000,1.00000,0.40000}%
\definecolor{mycolor3}{rgb}{0.00000,0.40000,1.00000}%
\definecolor{mycolor4}{rgb}{0.80000,0.00000,1.00000}%
\begin{tikzpicture}
\begin{axis}[%
width=0.951\figurewidth,
height=\figureheight,
at={(0\figurewidth,0\figureheight)},
scale only axis,
xmin=2,
xmax=20,
xtick=data,
ymode=log,
ymin=1e-05,
ymax=0.1,
ytick={1e-1,1e-2,1e-3,1e-4,1e-5},
yticklabels={-1,-2,-3,-4,-5},
yminorticks=true,
ylabel style={font=\color{white!15!black}},
axis background/.style={fill=white},
title style={font=\bfseries},
title={5-soliton},
axis x line*=bottom,
axis y line*=left,
legend style={legend cell align=left, align=left, draw=white!15!black}
]
\addplot [color=red, only marks, semithick, mark=*, mark options={solid, red, scale=0.5, fill=red}, forget plot]
  table[row sep=crcr]{%
  	2	0.00219383535104397\\
  	5	0.00103030984146614\\
  	8	0.000517649002046992\\
  	11	0.000274568522914567\\
  	14	0.000134917839104768\\
  	17	6.93126175670431e-05\\
};
\addplot [color=red, only marks, mark=o, mark options={solid, red}, forget plot]
  table[row sep=crcr]{%
  	2	0.00219380238072488\\
  	5	0.00103029197252361\\
  	8	0.000517645351369968\\
  	11	0.00027456809468682\\
  	14	0.000134917516768258\\
  	17	6.93126200491455e-05\\
};
\addplot [color=red, forget plot]
  table[row sep=crcr]{%
2	0.00211261504117495\\
5	0.00105881568820584\\
8	0.000530664905693041\\
11	0.000265962476067372\\
14	0.000133296997628866\\
17	6.68067534924325e-05\\
};
\addplot [color=red, densely dotted, thick, forget plot]
  table[row sep=crcr]{%
2	0.00211261504117496\\
5	0.00105881568820584\\
8	0.000530664905693043\\
11	0.000265962476067374\\
14	0.000133296997628867\\
17	6.68067534924327e-05\\
};
\addplot [color=mycolor1, only marks, semithick, mark=square*, mark options={solid, mycolor1, scale=0.5, fill=mycolor1}, forget plot]
  table[row sep=crcr]{%
  	2	0.00175792410746798\\
  	5	0.000914376297834305\\
  	8	0.000447925623181924\\
  	11	0.000228001968608024\\
  	14	0.000109430311465986\\
  	17	5.83223260446686e-05\\
};
\addplot [color=mycolor1, only marks, mark=square, mark options={solid, mycolor1}, forget plot]
  table[row sep=crcr]{%
  	2	0.0017578815247033\\
  	5	0.000914358377188519\\
  	8	0.00044792195888588\\
  	11	0.000228002720254426\\
  	14	0.000109430183019085\\
  	17	5.83223615905794e-05\\
};
\addplot [color=mycolor1, forget plot]
  table[row sep=crcr]{%
2	0.00178490029955749\\
5	0.000894569243435709\\
8	0.000448346684405585\\
11	0.000224705634463195\\
14	0.00011261959531707\\
17	5.6443503429185e-05\\
};
\addplot [color=mycolor1, densely dotted, thick, forget plot]
  table[row sep=crcr]{%
2	0.00178490029955749\\
5	0.00089456924343571\\
8	0.000448346684405586\\
11	0.000224705634463195\\
14	0.00011261959531707\\
17	5.64435034291852e-05\\
};
\addplot [color=mycolor2, only marks, semithick, mark=triangle*, mark options={rotate=90, solid, mycolor2, scale=0.5, fill=mycolor2}, forget plot]
  table[row sep=crcr]{%
  	2	0.0044328071306357\\
  	5	0.00219303204575186\\
  	8	0.00113495060119123\\
  	11	0.000572396410169417\\
  	14	0.000281641746225928\\
  	17	0.000135746439404091\\
};
\addplot [color=mycolor2, only marks, mark=triangle, mark options={rotate=90, solid, mycolor2}, forget plot]
  table[row sep=crcr]{%
  	2	0.00461947086776083\\
  	5	0.00219278100226581\\
  	8	0.00113487840240364\\
  	11	0.000572376709414568\\
  	14	0.000281637311723822\\
  	17	0.000135744838849242\\
};
\addplot [color=mycolor2, forget plot]
  table[row sep=crcr]{%
2	0.00447471356739239\\
5	0.00224266931411581\\
8	0.00112399722948248\\
11	0.000563333062049041\\
14	0.000282335338979139\\
17	0.000141502867498173\\
};
\addplot [color=mycolor2, densely dotted, thick, forget plot]
  table[row sep=crcr]{%
2	0.0044747135673924\\
5	0.00224266931411582\\
8	0.00112399722948248\\
11	0.000563333062049042\\
14	0.00028233533897914\\
17	0.000141502867498173\\
};
\addplot [color=mycolor3, only marks, semithick, mark=triangle*, mark options={solid, rotate=180,  scale=0.5, fill=mycolor3, mycolor3}, forget plot]
  table[row sep=crcr]{%
  	2	0.0212607552533909\\
  	5	0.0110423897378609\\
  	8	0.0057717899090857\\
  	11	0.00303515852879816\\
  	14	0.00149466446589393\\
  	17	0.000746967631965374\\
};
\addplot [color=mycolor3, only marks, mark=triangle, mark options={solid, rotate=180, mycolor3}, forget plot]
  table[row sep=crcr]{%
  	2	0.0191639061436051\\
  	5	0.010627677677\\
  	8	0.00578415542175085\\
  	11	0.00303542326348436\\
  	14	0.00149473676823113\\
  	17	0.000746988795167148\\
};
\addplot [color=mycolor3, forget plot]
  table[row sep=crcr]{%
2	0.0233141428490913\\
5	0.0116847507589271\\
8	0.00585624790849086\\
11	0.00293507668869203\\
14	0.00147102296608945\\
17	0.000737257930976558\\
};
\addplot [color=mycolor3, densely dotted, thick, forget plot]
  table[row sep=crcr]{%
2	0.02331414284909\\
5	0.0116847507589265\\
8	0.00585624790849054\\
11	0.00293507668869187\\
14	0.00147102296608937\\
17	0.000737257930976518\\
};
\addplot [color=mycolor4, only marks, semithick, mark=pentagon*, mark options={solid, mycolor4, scale=0.5, fill=mycolor4}, forget plot]
  table[row sep=crcr]{%
  	2	0.0134877280318258\\
  	5	0.00676276002954651\\
  	8	0.00345291417873755\\
  	11	0.00177549758155647\\
  	14	0.000859253014336932\\
  	17	0.000419994086122254\\
};
\addplot [color=mycolor4, only marks, mark=pentagon, mark options={solid, mycolor4}, forget plot]
  table[row sep=crcr]{%
  	2	0.0109663901022737\\
  	5	0.00643683557883025\\
  	8	0.0034023391194455\\
  	11	0.00174685435054563\\
  	14	0.000845232887338644\\
  	17	0.000412714038592159\\
};
\addplot [color=mycolor4, forget plot]
  table[row sep=crcr]{%
2	0.0128620690526036\\
5	0.00644630480719735\\
8	0.00323080567343743\\
11	0.0016192385578574\\
14	0.000811541693395165\\
17	0.000406734336285915\\
};
\addplot [color=mycolor4, densely dotted, thick, forget plot]
  table[row sep=crcr]{%
2	0.0128620690428864\\
5	0.00644630480232719\\
8	0.00323080567099657\\
11	0.00161923855663407\\
14	0.000811541692782048\\
17	0.000406734335978628\\
};
\node at (rel axis cs:0.5, 1) [anchor=north] {(b)};
\node at (17, 6.68067534924325e-05) [anchor=west] {\footnotesize ${\lambda_1}$};
\node at (17, 3.6443503429185e-05) [anchor=west] {\footnotesize ${\lambda_2}$};
\node at (17, 0.000141502867498173) [anchor=west] {\footnotesize ${\lambda_3}$};
\node (top) at (17, 0.000837257930976558) [anchor=west] {\footnotesize ${\lambda_4}$};
\node at (17, 0.000406734336285915) [anchor=west] {\footnotesize ${\lambda_5}$};
\node at (18, 0.002) [anchor=south, align=center] {Var\\of};
\end{axis}
\end{tikzpicture}%
%
%
\definecolor{mycolor1}{rgb}{0,0,0}%
\begin{tikzpicture}
\begin{axis}[%
width=0.951\figurewidth,
height=\figureheight,
at={(0\figurewidth,0\figureheight)},
scale only axis,
xmin=2,
xmax=17,
xtick=data,
ymode=log,
ymin=7e-5,
ymax=0.04,
ytick={1e-1,1e-2,1e-3,1e-4},
yticklabels={-1,-2,-3,-4},
yminorticks=true,
axis background/.style={fill=white},
title style={font=\bfseries},
title={$\mathbf{2.2sech(\boldsymbol{t})}$},
axis x line*=bottom,
axis y line*=left,
legend style={font=\footnotesize, at={(1,1)}, anchor=north east, inner sep=0cm, legend cell align=left, align=left, draw=white!15!black}
]
\addlegendimage{empty legend}
\addlegendentry{\tikz\draw[color=red, scale=0.5, fill=red] plot[mark=*] (0,0); \tikz\draw[color=red, fill=red] plot[mark=o] (0,0); $\sigma_{\lambda_1}^2$ \quad \tikz\draw[color=black, scale=0.5, fill=black] plot[mark=square*] (0,0); \tikz\draw[color=black, fill=black] plot[mark=square] (0,0); $\sigma_{\lambda_2}^2$}
\addlegendimage{empty legend}
\addlegendentry{\tikz\draw[color=black!50!green, scale=0.5, fill=black!50!green] plot[mark=triangle*] (0,0); \tikz\draw[color=black!50!green, fill=black!50!green] plot[mark=triangle] (0,0);$|\Real{\sigma_{\lambda_1\lambda_2}}|$}
\addplot [color=red, only marks, semithick, mark=*, mark options={solid, red, scale=0.5, fill=red}, forget plot]
table[row sep=crcr]{%
2	0.0127992218977028\\
5	0.00326972217135292\\
8	0.00169781477361194\\
11	0.000786159875224445\\
14	0.000402623452138206\\
17	0.000198057941029054\\
};
\addplot [color=red, only marks, mark=o, mark options={solid, red}, forget plot]
table[row sep=crcr]{%
2	0.00714962283825512\\
5	0.0032696723886619\\
8	0.00169777527387226\\
11	0.000786156232489749\\
14	0.000402623633841531\\
17	0.000198058906020957\\
};
\addplot [color=red, forget plot]
table[row sep=crcr]{%
2	0.00647891694764261\\
5	0.00324715046188985\\
8	0.00162743035716609\\
11	0.000815647318629118\\
14	0.00040879202323923\\
17	0.000204881343256165\\
};
\addplot [color=red, densely dotted, thick, forget plot]
  table[row sep=crcr]{%
2	0.00641977780037397\\
5	0.00321751067627121\\
8	0.00161257527500658\\
11	0.000808202141096285\\
14	0.000405060595307686\\
17	0.000203011199213675\\
};
\addplot [color=mycolor1, only marks, semithick, mark=square*, mark options={solid, mycolor1, scale=0.5, fill=mycolor1}, forget plot]
table[row sep=crcr]{%
2	0.00948254626919034\\
5	0.00379072925103486\\
8	0.00181877995476209\\
11	0.000905359955866076\\
14	0.000451196345617369\\
17	0.000217231432932982\\
};
\addplot [color=mycolor1, only marks, mark=square, mark options={solid, mycolor1}, forget plot]
table[row sep=crcr]{%
2	0.00769042830847126\\
5	0.0037907975663317\\
8	0.00181882799280015\\
11	0.000905348425595839\\
14	0.000451202781791426\\
17	0.00021723720075277\\
};
\addplot [color=mycolor1, forget plot]
table[row sep=crcr]{%
2	0.00720922962609385\\
5	0.00361317385288575\\
8	0.0018108766079422\\
11	0.000907588237574891\\
14	0.000454871638062811\\
17	0.000227975857936206\\
};
\addplot [color=mycolor1, densely dotted, thick, forget plot]
  table[row sep=crcr]{%
2	0.00719062254348728\\
5	0.00360384822062829\\
8	0.00180620272010926\\
11	0.000905245744661614\\
14	0.000453697610519815\\
17	0.00022738745031973\\
};
\addplot [color=black!50!green, only marks, semithick, mark=triangle*, mark options={solid, black!50!green, scale=0.5, fill=black!50!green}, forget plot]
table[row sep=crcr]{%
2	0.00514466892079926\\
5	0.00173073451082142\\
8	0.000828269002119982\\
11	0.000392475843365217\\
14	0.000200793215454568\\
17	9.88886240023721e-05\\
};
\addplot [color=black!50!green, only marks, mark=triangle, mark options={solid, black!50!green}, forget plot]
table[row sep=crcr]{%
2	0.00388708444369168\\
5	0.00173071361656787\\
8	0.000828236465708283\\
11	0.000392468207981683\\
14	0.000200794830508728\\
17	9.88911216814672e-05\\
};
\addplot [color=black!50!green, forget plot]
table[row sep=crcr]{%
2	0.00323948439115551\\
5	0.00162358822038196\\
8	0.00081372168872306\\
11	0.000407826922113623\\
14	0.000204397646892852\\
17	0.000102441491206152\\
};
\node at (rel axis cs:0, 1) [anchor=north west] {(c)};
\end{axis}
\end{tikzpicture}%
%
%
\definecolor{mycolor1}{rgb}{0,0,0}%
\begin{tikzpicture}
\begin{axis}[%
width=0.951\figurewidth,
height=\figureheight,
at={(0\figurewidth,0\figureheight)},
scale only axis,
xmin=0,
xmax=1,
xlabel={$\text{Phase difference }\alpha\ (\times 2\pi)$},
ymode=log,
ymin=4e-07,
ymax=0.1,
ytick={1e-2,1e-3,1e-4,1e-5,1e-6,1e-7,1e-8},
yticklabels={-2,-3,-4,-5,-6,-7},
yminorticks=true,
ylabel style={font=\color{white!15!black}},
yminorticks=true,
ylabel={$\log_{10}|\textrm{variance}|$},
axis background/.style={fill=white},
axis x line*=bottom,
axis y line*=left,
legend style={font=\footnotesize, at={(1,1)}, anchor=north east, inner sep=0cm, legend cell align=left, align=left, draw=white!15!black}
]
\addlegendimage{empty legend}
\addlegendentry{\tikz\draw[color=red, scale=0.5, fill=red] plot[mark=*] (0,0); \tikz\draw[color=red, fill=red] plot[mark=o] (0,0); $\sigma_{b_1}^2$ \quad \tikz\draw[color=black, scale=0.5, fill=black] plot[mark=square*] (0,0); \tikz\draw[color=black, fill=black] plot[mark=square] (0,0); $\sigma_{b_2}^2$}
\addlegendimage{empty legend}
\addlegendentry{\tikz\draw[color=black!50!green, scale=0.5, fill=black!50!green] plot[mark=triangle*] (0,0); \tikz\draw[color=black!50!green, fill=black!50!green] plot[mark=triangle] (0,0);$|\Real{\sigma_{b_1b_2}}|$}
\addplot [color=red, only marks, semithick, mark=*, mark options={solid, red, scale=0.5, fill=red}, forget plot]
  table[row sep=crcr]{%
  	0	0.000321714503774262\\
  	0.0625	0.0005528174852354\\
  	0.125	0.000508239774168385\\
  	0.1875	0.000711009901465572\\
  	0.25	0.00082547876975345\\
  	0.3125	0.000911091309246268\\
  	0.375	0.000999107338766428\\
  	0.4375	0.00100479972744529\\
  	0.5	0.0010744532409461\\
  	0.5625	0.00103732388211612\\
  	0.625	0.000988776254447773\\
  	0.6875	0.000947069336886711\\
  	0.75	0.000839547438351379\\
  	0.8125	0.000768987971119583\\
  	0.875	0.000579099662986391\\
  	0.9375	0.000368756125100425\\
};
\addplot [color=red, only marks, mark=o, mark options={solid, red}, forget plot]
  table[row sep=crcr]{%
  	0	0.000224362508724483\\
  	0.0625	0.000345055578194214\\
  	0.125	0.000514797409048204\\
  	0.1875	0.000720345792202261\\
  	0.25	0.000827428810737102\\
  	0.3125	0.000914029605126136\\
  	0.375	0.00100061001865847\\
  	0.4375	0.00100875117270823\\
  	0.5	0.00107567871138402\\
  	0.5625	0.0010392175525756\\
  	0.625	0.000995191617033396\\
  	0.6875	0.000953406837942324\\
  	0.75	0.000847917855249314\\
  	0.8125	0.000783033112458934\\
  	0.875	0.000587508257584584\\
  	0.9375	0.000379884509062692\\
};
\addplot [color=red, forget plot]
  table[row sep=crcr]{%
0	0.000225060109610881\\
0.0625	0.000345859708879571\\
0.125	0.000546639745213546\\
0.1875	0.00071466315129981\\
0.25	0.000845053245614057\\
0.3125	0.000942403786244783\\
0.375	0.00100980254717086\\
0.4375	0.00104937678653023\\
0.5	0.00106242258137212\\
0.5625	0.00104937678654567\\
0.625	0.0010098025470985\\
0.6875	0.000942403786345187\\
0.75	0.000845053245736046\\
0.8125	0.000714663151332206\\
0.875	0.000546639745207163\\
0.9375	0.000345859708946146\\
};
\addplot [color=mycolor1, only marks, semithick, mark=square*, mark options={solid, mycolor1, scale=0.5, fill=mycolor1}, forget plot]
  table[row sep=crcr]{%
  	0	0.000680175555794857\\
  	0.0625	0.000790765265425097\\
  	0.125	0.000937206918232257\\
  	0.1875	0.000862582871447443\\
  	0.25	0.000805192839703384\\
  	0.3125	0.000756727178051225\\
  	0.375	0.000702926223413988\\
  	0.4375	0.000708754781270107\\
  	0.5	0.000702580010498437\\
  	0.5625	0.000678928679071013\\
  	0.625	0.000687788879140803\\
  	0.6875	0.000730225349388155\\
  	0.75	0.000804085543178053\\
  	0.8125	0.000848792189668175\\
  	0.875	0.000900146680644247\\
  	0.9375	0.000782006518012735\\
};
\addplot [color=mycolor1, only marks, mark=square, mark options={solid, mycolor1}, forget plot]
  table[row sep=crcr]{%
  	0	0.000666964461846124\\
  	0.0625	0.000888097472489087\\
  	0.125	0.0009977446819735\\
  	0.1875	0.000961812521878307\\
  	0.25	0.000893483264007859\\
  	0.3125	0.00087333370190322\\
  	0.375	0.000788296525959756\\
  	0.4375	0.000797672972476397\\
  	0.5	0.000801156690810637\\
  	0.5625	0.000774010292741906\\
  	0.625	0.000770283746066039\\
  	0.6875	0.000831062897236088\\
  	0.75	0.000882057144783453\\
  	0.8125	0.000940002837845056\\
  	0.875	0.000978775822332502\\
  	0.9375	0.000871335867070916\\
};
\addplot [color=mycolor1, forget plot]
  table[row sep=crcr]{%
0	0.000653020532721252\\
0.0625	0.000848984419028669\\
0.125	0.000986499032595992\\
0.1875	0.000956750608612213\\
0.25	0.000890932359309426\\
0.3125	0.000833808788277759\\
0.375	0.000793133066915318\\
0.4375	0.000769147409624129\\
0.5	0.000761238552569545\\
0.5625	0.000769147409624202\\
0.625	0.000793133066914319\\
0.6875	0.000833808788278716\\
0.75	0.000890932359310765\\
0.8125	0.000956750608612731\\
0.875	0.000986499032592846\\
0.9375	0.000848984419029735\\
};
\addplot [color=black!50!green, only marks, semithick, mark=triangle*, mark options={solid, black!50!green, scale=0.5, fill=black!50!green}, forget plot]
  table[row sep=crcr]{%
  	0	0.000311575383011376\\
  	0.0625	0.000180288657522964\\
  	0.125	1.19525800711079e-05\\
  	0.1875	4.53621776383344e-05\\
  	0.25	9.83926106131742e-05\\
  	0.3125	2.64778678404364e-05\\
  	0.375	2.61193785983153e-05\\
  	0.4375	1.9659648905052e-05\\
  	0.5	3.25555566199154e-05\\
  	0.5625	1.67498836001505e-05\\
  	0.625	7.16932250266532e-06\\
  	0.6875	8.38599642385017e-05\\
  	0.75	7.27999320495848e-05\\
  	0.8125	3.98809984937682e-05\\
  	0.875	3.72842718007254e-05\\
  	0.9375	0.000184465019877963\\
};
\addplot [color=black!50!green, only marks, mark=triangle, mark options={solid, black!50!green}, forget plot]
  table[row sep=crcr]{%
  	0	0.000326971269631809\\
  	0.0625	0.000192308211158544\\
  	0.125	1.67800831086687e-05\\
  	0.1875	2.38037156229709e-05\\
  	0.25	8.0927765318537e-05\\
  	0.3125	2.71484770907633e-06\\
  	0.375	3.01314602944582e-05\\
  	0.4375	3.89000638526756e-05\\
  	0.5	4.08375698293962e-05\\
  	0.5625	4.82960650713017e-07\\
  	0.625	1.34331497892294e-05\\
  	0.6875	6.69866673818732e-05\\
  	0.75	5.2825733178489e-05\\
  	0.8125	9.78808672595137e-06\\
  	0.875	5.56538060452893e-05\\
  	0.9375	0.000211421558184086\\
};
\addplot [color=black!50!green, forget plot]
  table[row sep=crcr]{%
0	0.000335502612843102\\
0.0625	0.000200766596879957\\
0.125	3.97550770819177e-05\\
0.1875	3.10658834767421e-05\\
0.25	4.51127826071062e-05\\
0.3125	2.93370720027184e-05\\
0.375	3.2160118169244e-06\\
0.4375	1.85748240464502e-05\\
0.5	2.68716645106786e-05\\
0.5625	1.85748240416415e-05\\
0.625	3.21601184836698e-06\\
0.6875	2.93370719376258e-05\\
0.75	4.51127825977157e-05\\
0.8125	3.1065883546117e-05\\
0.875	3.97550769606859e-05\\
0.9375	0.00020076659695832\\
};
\node at (rel axis cs:0, 1) [anchor=north west] {(d)};
\end{axis}
\draw [dashed, thick, black] (\figurewidth, -2.5em) -- (\figurewidth, \figureheight);
\end{tikzpicture}%
%
%
\definecolor{mycolor1}{rgb}{0,0,0}%
\definecolor{mycolor2}{rgb}{0.00000,1.00000,0.40000}%
\definecolor{mycolor3}{rgb}{0.00000,0.40000,1.00000}%
\definecolor{mycolor4}{rgb}{0.80000,0.00000,1.00000}%
\begin{tikzpicture}
\begin{axis}[%
width=0.951\figurewidth,
height=\figureheight,
at={(0\figurewidth,0\figureheight)},
scale only axis,
xmin=2,
xmax=20,
xtick=data,
xlabel style={font=\color{white!15!black}},
xlabel={SNR (dB)},
ymode=log,
ymin=0.001,
ymax=1,
ytick={1e-1,1e-2,1e-3,1e-4,1e-5},
yticklabels={-1,-2,-3,-4,-5},
yminorticks=true,
ylabel style={font=\color{white!15!black}},
axis background/.style={fill=white},
axis x line*=bottom,
axis y line*=left,
legend style={legend cell align=left, align=left, draw=white!15!black}
]
\addplot [color=red, only marks, semithick, mark=*, mark options={solid, red, scale=0.5, fill=red}, forget plot]
  table[row sep=crcr]{%
  	2	0.182234234125358\\
  	5	0.0864756563383983\\
  	8	0.041517794374413\\
  	11	0.0194167741583234\\
  	14	0.00988586042653986\\
  	17	0.00448380619115067\\
};
\addplot [color=red, only marks, mark=o, mark options={solid, red}, forget plot]
  table[row sep=crcr]{%
  	2	0.184022714219697\\
  	5	0.0868184379602661\\
  	8	0.0416478444039245\\
  	11	0.0194459588585983\\
  	14	0.00989545347570694\\
  	17	0.00449256328006551\\
};
\addplot [color=red, forget plot]
  table[row sep=crcr]{%
2	0.148733644099135\\
5	0.0745434036333485\\
8	0.0373602022521591\\
11	0.018724456414515\\
14	0.00938445851156522\\
17	0.00470337080050128\\
};
\addplot [color=mycolor1, only marks, semithick, mark=square*, mark options={solid, mycolor1, scale=0.5, fill=mycolor1}, forget plot]
  table[row sep=crcr]{%
  	2	0.586622145858169\\
  	5	0.286118600192227\\
  	8	0.134372215514889\\
  	11	0.0616115157470355\\
  	14	0.0304621758691393\\
  	17	0.0146350697544885\\
};
\addplot [color=mycolor1, only marks, mark=square, mark options={solid, mycolor1}, forget plot]
  table[row sep=crcr]{%
  	2	0.586878691708413\\
  	5	0.285952169805071\\
  	8	0.134239098111943\\
  	11	0.0616291634497562\\
  	14	0.0304447127348255\\
  	17	0.0146377310477736\\
};
\addplot [color=mycolor1, forget plot]
  table[row sep=crcr]{%
2	0.464094746153437\\
5	0.232598361965592\\
8	0.11657532957977\\
11	0.0584260669412726\\
14	0.0292823988620182\\
17	0.0146759644796253\\
};
\addplot [color=mycolor2, only marks, semithick, mark=triangle*, mark options={solid, mycolor2, rotate=90, scale=0.5, fill=mycolor2}, forget plot]
  table[row sep=crcr]{%
  	2	0.294887357334178\\
  	5	0.155841163306558\\
  	8	0.073282781398008\\
  	11	0.036744898075847\\
  	14	0.0183534185694892\\
  	17	0.00865073090798753\\
};
\addplot [color=mycolor2, only marks, mark=triangle, mark options={solid, rotate=90, mycolor2}, forget plot]
  table[row sep=crcr]{%
  	2	0.299326035447143\\
  	5	0.155918035358369\\
  	8	0.0732670032394039\\
  	11	0.0367807657974457\\
  	14	0.0183441550323491\\
  	17	0.0086512681669538\\
};
\addplot [color=mycolor2, forget plot]
  table[row sep=crcr]{%
2	0.285287037842079\\
5	0.14298222128579\\
8	0.0716608639441078\\
11	0.0359155101594877\\
14	0.0180003951811459\\
17	0.00902156826503617\\
};
\addplot [color=mycolor3, only marks, semithick, mark=triangle*, mark options={solid, rotate=180, mycolor3, scale=0.5, fill=mycolor3}, forget plot]
  table[row sep=crcr]{%
  	2	0.109981581696544\\
  	5	0.0449426788046893\\
  	8	0.0208678848064767\\
  	11	0.00993770422037372\\
  	14	0.00463293631766126\\
  	17	0.00237426847220765\\
};
\addplot [color=mycolor3, only marks, mark=triangle, mark options={solid, rotate=180, mycolor3}, forget plot]
  table[row sep=crcr]{%
  	2	0.119752178805373\\
  	5	0.0433060269551409\\
  	8	0.0209132301088571\\
  	11	0.00994883412615883\\
  	14	0.00463494275379925\\
  	17	0.00238476731102411\\
};
\addplot [color=mycolor3, forget plot]
  table[row sep=crcr]{%
2	0.0699479501888007\\
5	0.0350570196530233\\
8	0.0175701306991157\\
11	0.00880592519955939\\
14	0.00441341729029586\\
17	0.00221194840256615\\
};
\addplot [color=mycolor4, only marks, semithick, mark=pentagon*, mark options={solid, mycolor4, scale=0.5, fill=mycolor4}, forget plot]
  table[row sep=crcr]{%
  	2	0.122405847411139\\
  	5	0.0508081371176394\\
  	8	0.0257458219753479\\
  	11	0.0132412203479704\\
  	14	0.00634655788071731\\
  	17	0.00323244111813474\\
};
\addplot [color=mycolor4, only marks, mark=pentagon, mark options={solid, mycolor4}, forget plot]
  table[row sep=crcr]{%
  	2	0.0966989648169794\\
  	5	0.0485656924604706\\
  	8	0.0258512264711232\\
  	11	0.0131512262104282\\
  	14	0.00643933301505652\\
  	17	0.00326815884118052\\
};
\addplot [color=mycolor4, forget plot]
  table[row sep=crcr]{%
2	0.10284879913707\\
5	0.0515465051213952\\
8	0.0258344503049461\\
11	0.0129478966806172\\
14	0.00648932051865026\\
17	0.00325236459886302\\
};
\node at (rel axis cs:0.5, 1) [anchor=north] {(e)};
\node at (17, 0.00500337080050128) [anchor=west] {\footnotesize ${b_1}$};
\node (top) at (17, 0.0186759644796253) [anchor=west] {\footnotesize ${b_2}$};
\node at (17, 0.00902156826503617) [anchor=west] {\footnotesize ${b_3}$};
\node at (17, 0.00181194840256615) [anchor=west] {\footnotesize ${b_4}$};
\node at (17, 0.00325236459886302) [anchor=west] {\footnotesize ${b_5}$};
\node at (18, 0.03) [anchor=south, align=center] {Var\\of};
\end{axis}
\end{tikzpicture}%
%
%
\definecolor{mycolor1}{rgb}{0,0,0}%
\begin{tikzpicture}
\begin{axis}[%
width=0.951\figurewidth,
height=\figureheight,
at={(0\figurewidth,0\figureheight)},
scale only axis,
xmin=2,
xmax=17,
xtick=data,
xlabel style={font=\color{white!15!black}},
xlabel={SNR (dB)},
ymode=log,
ymin=8e-5,
ymax=1,
yminorticks=true,
ytick={1e-0,1e-1,1e-2,1e-3,1e-4},
yticklabels={0,-1,-2,-3,-4},
ytick={1, 1e-1, 1e-2, 1e-3, 1e-4},
axis background/.style={fill=white},
axis x line*=bottom,
axis y line*=left,
legend style={font=\footnotesize, at={(1,1)}, anchor=north east, inner sep=0cm, legend cell align=left, align=left, draw=white!15!black}
]
\addlegendimage{empty legend}
\addlegendentry{\tikz\draw[color=red, scale=0.5, fill=red] plot[mark=*] (0,0); \tikz\draw[color=red, fill=red] plot[mark=o] (0,0); $\sigma_{b_1}^2$ \quad \tikz\draw[color=black, scale=0.5, fill=black] plot[mark=square*] (0,0); \tikz\draw[color=black, fill=black] plot[mark=square] (0,0); $\sigma_{b_2}^2$}
\addlegendimage{empty legend}
\addlegendentry{\tikz\draw[color=black!50!green, scale=0.5, fill=black!50!green] plot[mark=triangle*] (0,0); \tikz\draw[color=black!50!green, fill=black!50!green] plot[mark=triangle] (0,0);$|\Real{\sigma_{b_1b_2}}|$}
\addplot [color=red, only marks, semithick, mark=*, mark options={solid, red, scale=0.5, fill=red}, forget plot]
table[row sep=crcr]{%
2	0.0471825965888476\\
5	0.0216630831366008\\
8	0.0110351048792599\\
11	0.00548337239995644\\
14	0.00259617381969024\\
17	0.0012782535723079\\
};
\addplot [color=red, only marks, mark=o, mark options={solid, red}, forget plot]
table[row sep=crcr]{%
2	0.165527469529481\\
5	0.0436829629918157\\
8	0.0162547589222057\\
11	0.00645088206238543\\
14	0.00287619152733384\\
17	0.00136380383562415\\
};
\addplot [color=red, forget plot]
table[row sep=crcr]{%
2	0.0414250209851634\\
5	0.0207616916705057\\
8	0.0104054948137632\\
11	0.00521510116023288\\
14	0.0026137421235835\\
17	0.00130997418433389\\
};
\addplot [color=mycolor1, only marks, semithick, mark=square*, mark options={solid, mycolor1, scale=0.5, fill=mycolor1}, forget plot]
table[row sep=crcr]{%
2	0.0402143138050757\\
5	0.0172612292112312\\
8	0.00854093052413681\\
11	0.00436151795545162\\
14	0.00205349866638274\\
17	0.00107410456551736\\
};
\addplot [color=mycolor1, only marks, mark=square, mark options={solid, mycolor1}, forget plot]
table[row sep=crcr]{%
2	0.0362221828985733\\
5	0.0172531851410668\\
8	0.00870413046331708\\
11	0.00441361128079098\\
14	0.00207533721894606\\
17	0.00108517213120571\\
};
\addplot [color=mycolor1, forget plot]
table[row sep=crcr]{%
2	0.0332697841003588\\
5	0.0166743910566355\\
8	0.00835699192609445\\
11	0.00418841766488473\\
14	0.00209918146273918\\
17	0.0010520829501919\\
};
\addplot [color=black!50!green, only marks, semithick, mark=triangle*, mark options={solid, black!50!green, scale=0.5, fill=black!50!green}, forget plot]
table[row sep=crcr]{%
2	0.00692582868128535\\
5	0.00225353579092406\\
8	0.000994783826108619\\
11	0.000596158539551751\\
14	0.000231664486733657\\
17	0.000102019487221337\\
};
\addplot [color=black!50!green, only marks, mark=triangle, mark options={solid, black!50!green}, forget plot]
table[row sep=crcr]{%
2	0.00722328300269955\\
5	0.00198606009193092\\
8	0.00099091986001809\\
11	0.000581401182643371\\
14	0.000226871361040128\\
17	9.67729332790611e-05\\
};
\addplot [color=black!50!green, forget plot]
table[row sep=crcr]{%
2	0.00383591699181119\\
5	0.0019225126255497\\
8	0.000963538784412757\\
11	0.000482913337852414\\
14	0.000242029999879964\\
17	0.000121302346094648\\
};
\node at (rel axis cs:0, 1) [anchor=north west] {(f)};
\end{axis}
\end{tikzpicture}%
	\begin{tikzpicture}
	\begin{axis}[
	width=0.8\textwidth,
	hide axis,
	xmin=10,
	xmax=50,
	ymin=0,
	ymax=0.4,
	legend columns=4,
	legend style={draw=white!15!black,legend cell align=left, /tikz/every even column/.append style={column sep=1em}},
	]
	\addlegendimage{only marks, semithick, gray, mark=diamond*, mark options={solid, scale=0.5, gray, fill=gray}}
	\addlegendentry{(filled markers) FB-CFQM}
	\addlegendimage{only marks, gray, mark=diamond, mark options={solid, gray}}
	\addlegendentry{(empty markers) FC}
	\addlegendimage{solid, gray}
	\addlegendentry{Analytic (this work)}
	\addlegendimage{densely dotted, thick, gray}
	\addlegendentry{Analytic~\cite{mansoor_parts}}
	\end{axis}
	\end{tikzpicture}
	\caption{Statistics of the discrete spectrum of different noisy pulses. The first column gives the analytic and numerical second-order moments of $\lambda_k$ (a) and $b_k$ (d) of a noisy 2-soliton with SNR=10 dB, as a function of the phase difference $\alpha$ between the $b_1$ and $b_2$. The second column gives the moments of $\lambda_k$ (b) and $b_k$ (e) of a $5$-soliton as a function of SNR. The third column (c) and (f) gives the moments for $q(t)=2.2\mathrm{sech}(t)$ as a function of SNR.}
	\label{fig:noise}
\end{figure*}
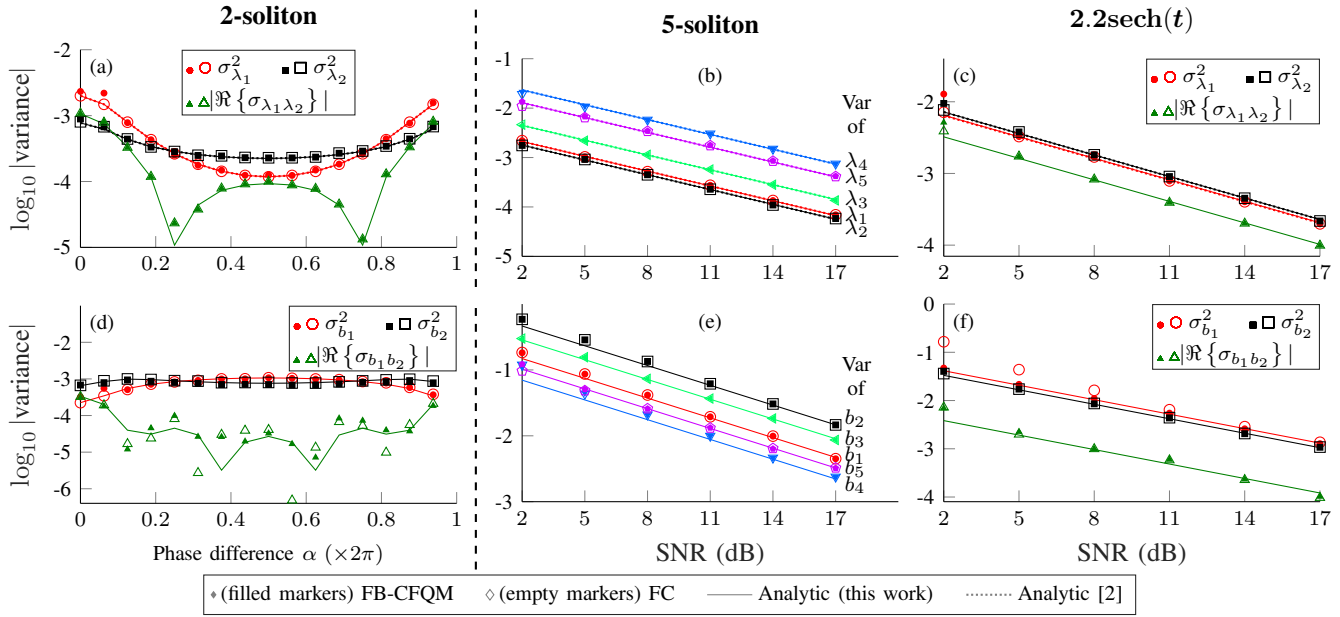

\begin{figure}[tbp]\centering
    \setlength{\figurewidth}{0.55\columnwidth}
    \setlength{\figureheight}{\figurewidth}
%
%
\begin{tikzpicture}

\begin{axis}[%
width=\figurewidth,
height=\figureheight,
at={(0\figurewidth,0\figureheight)},
enlargelimits=false,
axis x line*=top,
axis y line*=right,
scale only axis,
xmin=-0.5,
xmax=7.5,
xtick={0,1,2,3,4,5,6,7},
xticklabels={$\Re\lambda_1$,$\Re\lambda_2$,$\Im\lambda_1$,$\Im\lambda_2$,$\Re b_1$,$\Re b_2$,$\Im b_1$,$\Im b_2$},
xtick style={draw=none},
y dir=reverse,
ymin=-0.5,
ymax=7.5,
ytick={0,1,2,3,4,5,6,7},
yticklabels={$\Re\lambda_1$,$\Re\lambda_2$,$\Im\lambda_1$,$\Im\lambda_2$,$\Re b_1$,$\Re b_2$,$\Im b_1$,$\Im b_2$},
axis background/.style={fill=white},
ytick style={draw=none},
]
\end{axis}

\begin{axis}[%
width=\figurewidth,
height=\figureheight,
at={(0\figurewidth,0\figureheight)},
enlargelimits=false,
scale only axis,
axis on top,
xmin=-0.5,
xmax=7.5,
xtick={0,1,2,3,4,5,6,7},
xticklabels={$\Re\lambda_1$,$\Re\lambda_2$,$\Im\lambda_1$,$\Im\lambda_2$,$\Re b_1$,$\Re b_2$,$\Im b_1$,$\Im b_2$},
xtick style={draw=none},
y dir=reverse,
ymin=-0.5,
ymax=7.5,
ytick={0,1,2,3,4,5,6,7},
yticklabels={$\Re\lambda_1$,$\Re\lambda_2$,$\Im\lambda_1$,$\Im\lambda_2$,$\Re b_1$,$\Re b_2$,$\Im b_1$,$\Im b_2$},
axis background/.style={fill=white},
ytick style={draw=none},
legend style={legend cell align=left, align=left, draw=white!15!black},
colormap name=viridis,
colorbar,
colorbar left,
colorbar style={ytick={-4,-5,-6,-7,-8,-9,-10,-11,-12,-13,-14,-15,-16,-17,-18}, ylabel={\normalsize $\log_{10} |\text{covariance}|$}}
]
\addplot[matrix plot, mesh/cols=8, point meta=explicit] 
table
[meta=C] {
x y n C
0 0 1.550990e-05 -4.809391
0 1 3.817154e-06 -5.420272
0 2 4.771106e-07 -6.321933
0 3 -3.689371e-07 -6.431528
0 4 3.153282e-05 -4.498460
0 5 3.758135e-05 -4.411338
0 6 1.638509e-06 -5.804365
0 7 -3.884218e-07 -5.965497
1 0 3.817154e-06 -5.420272
1 1 6.295059e-05 -4.203471
1 2 6.246284e-06 -5.205294
1 3 2.871338e-06 -5.542575
1 4 1.418448e-06 -5.934781
1 5 7.350605e-05 -4.142524
1 6 1.433068e-05 -4.872833
1 7 4.698400e-06 -6.394210
2 0 4.771106e-07 -6.321933
2 1 6.246284e-06 -5.205294
2 2 1.104331e-04 -3.956900
2 3 9.605253e-05 -4.018712
2 4 1.750668e-06 -5.604913
2 5 6.870847e-06 -5.285379
2 6 5.955473e-05 -4.229604
2 7 1.237405e-04 -3.914794
3 0 -3.689371e-07 -6.431528
3 1 2.871338e-06 -5.542575
3 2 9.605253e-05 -4.018712
3 3 1.661889e-04 -3.781725
3 4 1.305776e-06 -5.564463
3 5 -6.661038e-06 -5.423780
3 6 -1.011774e-04 -3.990338
3 7 2.046180e-04 -3.704560
4 0 3.153282e-05 -4.498460
4 1 1.418448e-06 -5.934781
4 2 1.750668e-06 -5.604913
4 3 1.305776e-06 -5.564463
4 4 2.110516e-04 -3.676471
4 5 1.211176e-04 -3.929483
4 6 6.663893e-06 -5.219104
4 7 6.825170e-08 -5.533356
5 0 3.758135e-05 -4.411338
5 1 7.350605e-05 -4.142524
5 2 6.870847e-06 -5.285379
5 3 -6.661038e-06 -5.423780
5 4 1.211176e-04 -3.929483
5 5 3.658076e-04 -3.490745
5 6 1.530377e-05 -4.996011
5 7 2.624865e-06 -5.366189
6 0 1.638509e-06 -5.804365
6 1 1.433068e-05 -4.872833
6 2 5.955473e-05 -4.229604
6 3 -1.011774e-04 -3.990338
6 4 6.663893e-06 -5.219104
6 5 1.530377e-05 -4.996011
6 6 8.646271e-04 -3.063577
6 7 -8.028003e-05 -4.070202
7 0 -3.884218e-07 -5.965497
7 1 4.698400e-06 -6.394210
7 2 1.237405e-04 -3.914794
7 3 2.046180e-04 -3.704560
7 4 6.825170e-08 -5.533356
7 5 2.624865e-06 -5.366189
7 6 -8.028003e-05 -4.070202
7 7 4.353491e-04 -3.420742
};

\end{axis}
\draw [ultra thick] (-0.15\figurewidth, 0.5\figureheight) -- (1.15\figurewidth, 0.5\figureheight);
\draw [ultra thick] (0.5\figurewidth, -0.25\figureheight) -- (0.5\figurewidth, 1.25\figureheight);

\node at (0.1\figurewidth, 1.1\figureheight) [anchor=south] {\large $\sigma^2\mathbf{C}_{\tilde{\mathbf{\lambda}}}$};
\node at (0.9\figurewidth, 1.1\figureheight) [anchor=south] {\large $\sigma^2\mathbf{C}_{\tilde{\mathbf{\lambda}}\tilde{\mathbf{b}}}$};
\node at (0.1\figurewidth, -0.1\figureheight) [anchor=north] {\large $\sigma^2\mathbf{C}_{\tilde{\mathbf{\lambda}}\tilde{\mathbf{b}}}^{\mathrm{T}}$};
\node at (0.9\figurewidth, -0.1\figureheight) [anchor=north] {\large $\sigma^2\mathbf{C}_{\tilde{\mathbf{b}}}$};

\end{tikzpicture}%
    \caption{Covariance matrix of $\lambda_k$ and $b_k$ for a $2$-soliton with $\lambda_k=\left[0.6j, 0.3j\right]$ and $b_k=\left[\frac{1}{3}j, -\frac{1}{3}j\right]$ (SNR=$10$ dB). Note that the color scale is logarithmic.}
    \label{fig:C_big}
\end{figure}
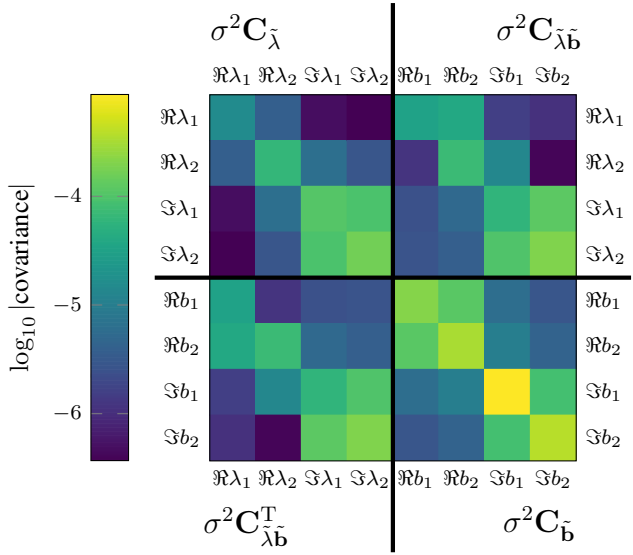

\section{Conclusion}\label{sec:conclusion}
We extended the Fourier Collocation (FC) method~\cite[Section 2.4.3]{yang_nlse} to compute the full discrete spectrum ($\lambda_k$ and $b(\lambda_k)$) of an arbitrary pulse. We showed numerically that our extended FC method 
estimates the discrete spectrum rather precisely with small number of samples.
In comparison to the state-of-the-art FB-CFQM algorithm, both algorithms have a comparable 
estimation error for small number of samples. 
However, the FB-CFQM estimation errors decrease monotonically in number of samples while the FC estimation errors saturates to an error floor at some number of samples.

We applied perturbation theory of linear operators~\cite{kato_eigenvalues} to our method and derived analytic expressions for the second-order statistics of the discrete spectrum (eigenvalues and spectral amplitudes) of a pulse contaminated with additive white Gaussian noise. Our simulations show that our expressions very accurately predict the numerical statistics. Our expressions, though involved, are much faster than Monte-Carlo simulations and could be used to design better NFT transmission systems by avoiding combinations of spectral parameters that are less robust to noise.

{This work assumes an AWGN channel. In optical fiber, usually a distributed noise model is assumed, where AWGN noise is added incrementally along the fiber. Our work can be extended to the distributed model by integrating our ($z$-dependent) analytic covariances along the $z$ variable, similarly to~\cite{prati2018some}. This is left for future work.}

\section{Acknowledgment}
The authors would like to thank Shrinivas Chimmalgi 
for pointing out numerical mistakes in evaluation of 
the FB-CFQM algorithm in the earlier version.

\bibliographystyle{IEEEtran}
\bibliography{soliton_fc_perturbation.bib}

\end{document}